\title{4-Swap: Achieving Grief-Free and Bribery-Safe Atomic Swaps Using Four Transactions} 
\titlerunning{4-Swap: Grief-Free and Bribery-Safe Atomic Swaps Using Four Transactions}
\author{Kirti Singh}{Indian Institute of Technology Bombay, India \and Institute for Development and Research in Banking Technology, Hyderabad, India }{kirtisingh@cse.iitb.ac.in}{https://orcid.org/0009-0006-1846-8682}{}
\author{Vinay J. Ribeiro}{Indian Institute of Technology Bombay, India }{vinayr@iitb.ac.in}{https://orcid.org/0000-0001-5627-5343}{}
\author{Susmita Mandal}{Institute for Development and Research in Banking Technology, Hyderabad, India }{msusmita@idrbt.ac.in}{https://orcid.org/0009-0006-1846-8682}{}
\authorrunning{K. Singh, V.\,J. Ribeiro and S. Mandal}
\keywords{Atomic Swaps, Griefing, Bribery, HTLC} 
\begin{document}

\maketitle

\begin{abstract}
Cross-chain asset exchange is crucial for blockchain interoperability. Existing solutions rely on trusted third parties and risk asset loss, or use decentralized alternatives like atomic swaps, which suffer from grief attacks. Griefing occurs when a party prematurely exits, locking the counterparty's assets until a timelock expires. Hedged Atomic Swaps mitigate griefing by introducing a penalty premium; however, they increase the number of transactions from four (as in Tier Nolan's swap) to six, which in turn introduces new griefing risks. Grief-Free (GF) Swap reduces this to five transactions by consolidating assets and premiums on a single chain. However, no existing protocol achieves grief-free asset exchange in just four transactions.
    
This paper presents 4-Swap, the first cross-chain atomic swap protocol that is both grief-free and bribery-safe, while completing asset exchange in just four transactions. By combining the griefing premium and principal into a single transaction per chain, 4-Swap reduces on-chain transactions, leading to faster execution compared to previous grief-free solutions. It is fully compatible with Bitcoin and operates without the need for any new opcodes. A game-theoretic analysis shows that rational participants have no incentive to deviate from the protocol, ensuring robust compliance and security.
\end{abstract}

\section{Introduction}
\label{sec:introduction}

The trade of cross-chain assets has been a significant problem in blockchains. Centralized exchanges such as Binance, Coinbase, and CoinDCX solve this problem by taking ownership of users' assets and exchanging them among the interested parties. Using centralized exchanges makes the exchange efficient and easy. However, the involvement of a trusted third party leads to the risk of users losing their assets, as happened in the MTGox hack~\cite{DBLP:conf/service/DingC22, rao2022mt} and FTX collapse~\cite{DBLP:conf/fc/FuWYC23}. Thus, there is a need for trustless exchange protocols. Some cross-chain decentralized exchange solutions exist, such as two-way pegged tokens~\cite{DBLP:journals/csur/BelchiorVGC22} and using a third chain~\cite{DBLP:journals/corr/abs-2110-13871}. However, they face some issues: pegged tokens do not give native tokens, and using a third chain increases the overall overhead for the exchange. 
    
    Atomic swaps solve the issues by providing native tokens without needing a third coordinating chain. Tier Nolan introduced it in a Bitcoin forum in 2013~\cite{tnswap, nakamoto2008bitcoin}. It typically involves four transactions, the locking of tokens by the two parties, which later claim each others' tokens before a timelock expires. Atomic swaps primarily use Hashed Time-Locked Contracts (HTLCs), which provide a mechanism to lock assets on the blockchain with a hashlock and to refund the locked assets after the timelock expires. However, they can also leverage multi-signature transactions~\cite{DBLP:conf/podc/Herlihy18, DBLP:conf/esorics/HoenischMMR22}, adaptor signatures~\cite{DBLP:conf/esorics/HoenischMMR22, DBLP:conf/sp/ThyagarajanMM22}, and verifiable timed signatures~\cite{DBLP:conf/sp/ThyagarajanMM22} for implementation. Although Atomic Swap provides a decentralized exchange solution, it suffers from the problem of grief.
    
    Griefing is a common problem in Atomic Swaps wherein a party who locks the tokens waits for the timelock to expire when the counterparty abandons the exchange. To deter this premiums are used as collateral. In 2021, Xue and Herlihy~\cite{DBLP:conf/podc/XueH21} used the concept of cross-locking of griefing premium in the two chains to solve griefing. However, this further led to griefing on the locked griefing premium, and the number of transactions in the swap increased to six. 
    
    In 2022, Nadahalli \textit{\textit{et al.}} proposed combining the locking of griefing premium with the assets in one transaction~\cite{DBLP:conf/icbc2/NadahalliKW22}. Thus, the problem of griefing on the premium is solved. However, they failed to combine the premium on both chains, requiring five transactions for a grief-free atomic swap. Mazumdar also solves the griefing problem by allowing the participating parties to leave the exchange whenever they like, increasing the total number of transactions to eight~\cite{DBLP:conf/tpsisa/Mazumdar22}. This leads to our question: is achieving a grief-free atomic swap in only four transactions possible?

    We present the 4-Swap (or 4S, for short) protocol in two versions designed to simplify the exposition of this complex protocol. By breaking it down into incremental versions, we aim to make its functionality and security properties more accessible.
    We begin with 4S-v1, which operates under the simplifying assumption of zero delay in transactions being added to the blockchain. In this idealized setting, a published transaction is immediately included without waiting in the mempool, effectively eliminating the risks of bribery or censorship attacks. This version allows us to focus on the core mechanics of the protocol, which achieves grief-free execution in just four steps. It does so by leveraging cross-locking of principals and premiums, as seen in prior works~\cite{DBLP:conf/podc/XueH21, DBLP:conf/icbc2/NadahalliKW22}. We introduce two novel concepts of cross-publishing of locking transactions and a new timelock in the initiating chain for early refunds. These concepts help combine the principal and griefing premiums for both parties in both chains, reducing the total number of transactions to just four.
    
    Building on this foundation, we present the next version that removes the zero-delay assumption, thus accounting for real-world conditions where transaction delays and susceptibility to bribery attacks are practical concerns. To address this, we integrate the concept of mutual destruction from~\cite{DBLP:conf/sp/TsabaryYME21} coupled in both the chains, ensuring the protocol remains bribery-safe. The final version retains its simplicity and efficiency, maintaining grief-free and bribery safety within a four-step process. We do not claim the protocol to be bribery free as it does not considers the reverse bribery attacks where the miners incentivizes the involved parties to publish the conflicting transactions.
    The protocol's game-theoretical analysis demonstrates that it consistently follows its intended execution path under the assumption of rational behaviour from all parties. We perform a security analysis of the protocol using the UC Framework~\cite{DBLP:conf/tcc/CanettiDPW07} (See Appendix~\ref{securityanalysis}) and show the possible redeem paths of the protocol. The scripts presented in Appendix~\ref{implementation} show that the protocol can be implemented in Bitcoin without needing additional opcodes.

    \paragraph*{Our Contributions}
    \begin{enumerate}
        \item We propose 4-Swap, the first grief-free and bribery-safe atomic swap protocol that uses only four transactions, matching the number of transactions in the TN-Swap.
        \item We then present a game-theoretical analysis of our protocol to show that participants have no incentive to deviate from the protocol under rational behaviour.
        \item We perform a security analysis of the protocol using the UC framework~\cite{DBLP:conf/tcc/CanettiDPW07} in Appendix~\ref{securityanalysis}.
    \end{enumerate}

\section{Background}

\subsection{Atomic Swaps - TN Swap}
\begin{figure}
    \centering
    \includegraphics[width=0.8\textwidth]{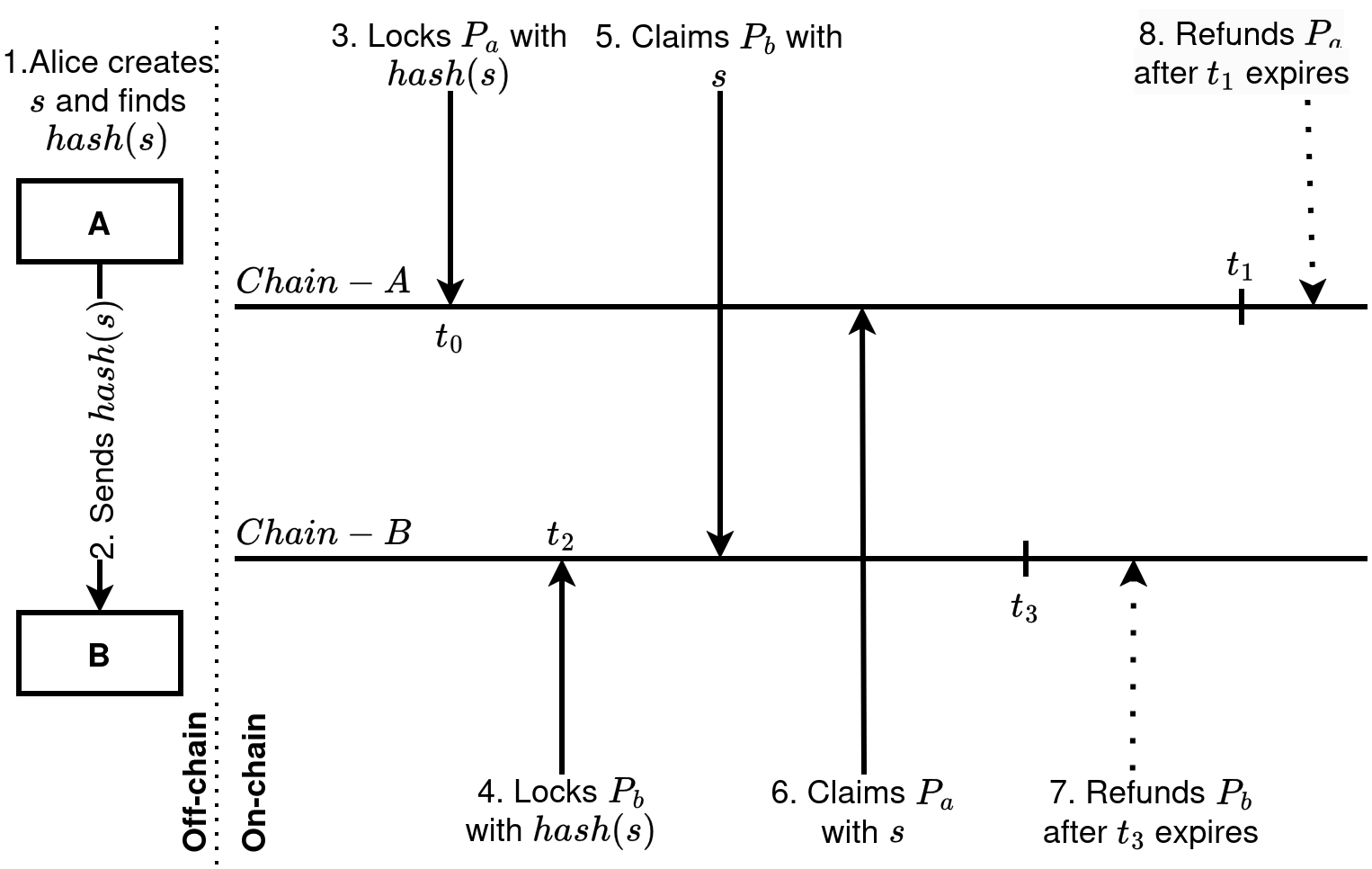}
    \caption{Atomic Swap Protocol}
    \label{fig:atomicswap}
\end{figure}
Atomic swap is a decentralized method of performing cross-chain exchanges. Tier Nolan first introduced it in 2013~\cite{tnswap}. It typically uses two HTLCs with the same hash to lock the tokens. The parties on the successful locking claim each other's tokens before the set timeout expires in their respective chains~\cite{herlihy2018atomic}. 

Consider $A$ and $B$, two parties interested in exchanging tokens \( P_a \) on $Chain\text{-}A$ and \( P_b \) on $Chain\text{-}B$, respectively. They must follow the following steps for a successful atomic swap exchange, as shown in Figure~\ref{fig:atomicswap}.
\begin{enumerate}
	\item $A$ generates secret \( s \), finds $\mathcal{H}(s)$ and sends $\mathcal{H}(s)$ to $B$.
	\item $A$ locks her tokens on $Chain\text{-}A$ with $\mathcal{H}(s)$ setting up a refund timelock of \( t_1 \).
	\item $B$ locks his tokens on $Chain\text{-}B$ with same $\mathcal{H}(s)$ setting a refund timelock of \( t_3 \).
	\item $A$ claims $B$'s locked tokens on $Chain\text{-}B$ with \( s \). 
	\item $B$ uses the revealed preimage \( s \) to claim $A$'s tokens.
	\item If $A$ does not claim $B$'s tokens, $B$ can refund after \( t_3 \).
	\item If $B$ does not claim $A$'s tokens, $A$ can refund after \( t_1 \).
\end{enumerate}

The protocol ensures that no party loses their tokens in the exchange. Moreover, the gap in two timelocks ensures that $B$ gets enough time to claim $A$'s token when $A$ has already claimed $B$'s token. Although atomic swap solves the problem of decentralized cross-chain asset exchange, it gives rise to the problem of griefing.

\subsubsection{Griefing Attack}
\label{grief}
Griefing occurs when a party locks the tokens, and the counterparty exits the exchange, forcing the counterparty to wait until the refund timelock expires to retrieve their tokens. 

In Figure~\ref{fig:atomicswap}, there are two possible scenarios of griefing.
\begin{itemize}
	\item $A$ locks tokens, but $B$ leaves the exchange. $A$ waits until \( t_1 \) to refund the locked tokens.
	\item $A$ and $B$ lock their tokens, but $A$ leaves without claiming $B$'s tokens. Since $A$ only knows the preimage \( s \), $B$ waits until the refund timelock \( t_3 \) expires to get back his tokens.
\end{itemize}

\subsubsection{Bribery Attacks}
\label{bribe}
Malicious parties bribe the miners to censor some transactions in the blockchain to gain an unfair advantage. HTLCs suffer from bribery attacks~\cite{10.1007/978-3-662-64322-8_3}. Suppose $A$ pays $B$ using HTLC and bribes the miners to censor B's claim transaction until the refund timelock expires. After the timelock expires, $A$ can refund the locked tokens, and $B$ fails to get the tokens. Since atomic swaps also use HTLCs, they are prone to bribery attacks. Bribery usually happens by either a party putting up a conflicting transaction(like the refund in HTLC) with a very high transaction fee or putting the refund transaction along with a future transaction that takes input from the refund transaction and can be redeemed by anyone(particularly miners).

In Figure~\ref{fig:atomicswap}, there are two possible scenarios of bribery attacks.
\begin{itemize}
    \item\ $A$ and $B$ lock their tokens, but $B$ bribes the miners to censor A's claim transaction until \(t_3\). Since $B$ can see A's claim transaction, he fetches \(s\) and claims A's tokens \(P_a\). After \(t_3\), he refunds his tokens \(P_b\) and $A$ gets nothing.
    \item $A$ claims $B$ 's tokens after both successfully locked them. She then bribes the miners to censor $B$ 's claim transaction until \(t_1\). After \(t_1\), $A$ refunds her locked tokens \(P_a\), and $B$ gets nothing.
\end{itemize}

\subsubsection{Bribery Attacks Solutions for HTLCs}
In 2021, Tsabary \textit{et al.}~\cite{DBLP:conf/sp/TsabaryYME21} proposed MAD-HTLC to remove bribery in HTLC. They modified the HTLC by adding a premium to be locked by the sender in addition to the principal value. The idea was to allow the miners to slash all the locked tokens, disincentivizing the parties to attempt bribery. 

Suppose there are two parties where $A$ is paying $B$ using HTLC. The solution works as follows:

\begin{itemize}
    \item $A$ locks \(X\) (principal) and \(Y\) (premium) with \(H(pre_a)\). $Y$ can be refunded after timelock.
    \item To refund \(X\), $A$ must reveal \(pre_b\) after the timelock.
    \item  $B$ claims \(X\) before the timelock by revealing \(pre_a\).
    \item If $A$ attempts to censor $B$'s claim by bribing miners with a high-fee refund transaction, miners can extract \(pre_b\) from $A$'s refund transaction. Anyone with \(pre_a, pre_b\) can claim \(X + Y\), so miners take the funds. Bribery fails; $A$ gains nothing.
\end{itemize}

The sender can refund the premium after the timelock expires. Also, to refund the principal tokens locked by the hash of \(pre_a\), the sender has to release a new preimage \(pre_b\) after the timelock expires. By bribing miners, the sender can still try to censor the receiver's claimed transaction, which reveals \(pre_a\). However, if the sender tries to bribe the miners by putting in a refund transaction with very high transaction fees, the miners can fetch the \(pre_b\) from it. Since MAD-HTLC allows anyone with \(pre_a\) and \(pre_b\) to claim the tokens and the premium, miners can create their transaction and try to create a block with it. Thus, the sender does not get any benefit from bribing.

\subsubsection{Reverse Bribery Attacks}
MAD-HTLC does not address scenarios where miners incentivize protocol participants to behave maliciously, a situation known as a reverse bribery attack. This attack was introduced by Wadhwa \textit{et al.}~\cite{DBLP:conf/ndss/WadhwaSZN23} in 2022, along with the proposal of He-HTLC, a protocol designed to mitigate such threats. Unlike MAD-HTLC, He-HTLC introduces a delay in the sender's refund process and allows multiple miners to collectively punish a malicious sender. Additionally, when a miner redeems funds using $pre_a$ and $pre_b$, they receive only a portion of the locked tokens; the remainder is burned. This partial reward mechanism discourages collusion between miners and protocol participants.

However, He-HTLC does not offer a construction for atomic swaps within its framework. Rapidash~\cite{chung2022rapidash}, also published in 2022, builds on similar ideas. It introduces a two-step refund mechanism and extends the protocol to support atomic swaps. While this approach solves the reverse bribery attacks, it introduces increased on-chain transaction overhead and fails to be grief-free as one party can abandon the protocol after the counterparty has already locked their assets.

\section{Related Works}
    We now discuss the related work of minimizing the number of on-chain transactions of atomic swaps while keeping it grief-free. Both works assume that parties have some tokens in the chain they want to receive tokens to safeguard from griefing attacks.
    \subsection{Hedged Atomic Swap}
    \label{hedgedswap}
    Xue and Herlihy~\cite{DBLP:conf/podc/XueH21} in 2021 proposed a Hedged Atomic Swap protocol that tried to solve the problem of griefing attacks using griefing premiums. The parties have to cross-lock the premium amounts with the principal amounts of the counterparty. The premiums serve as an amount to compensate the parties from griefing. Thus, it introduces the assumption that parties need a small amount of tokens in the other chain where they want to exchange them. The cross-locking helps get the refund premium and claim principal in one step. However, since the protocol requires premium locking, the number of transactions rose to six, two more than the TN-Swap.

    \begin{figure}
        \centering
        \includegraphics[width=0.7\textwidth]{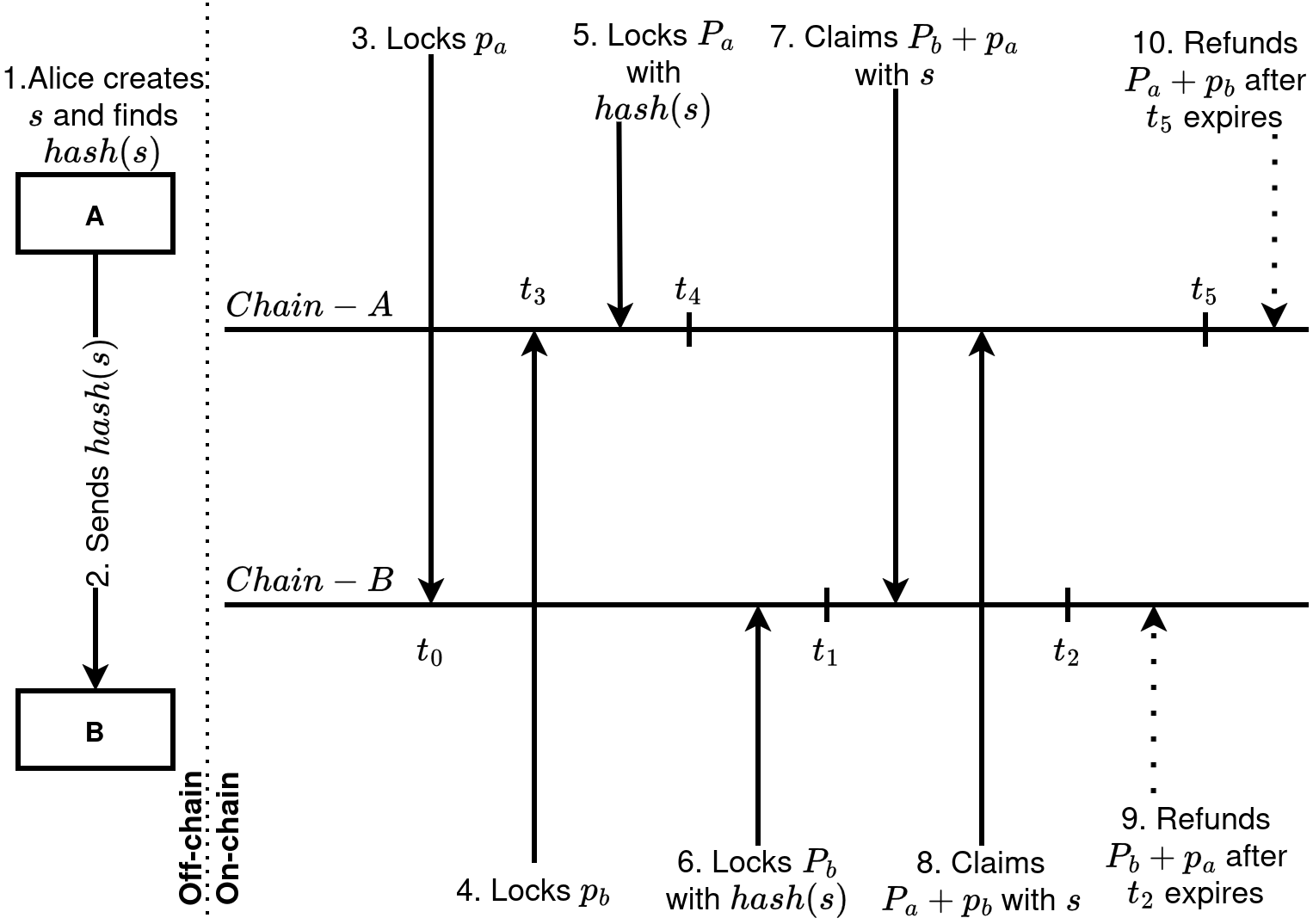}
        \caption{Hedged Atomic Swap Protocol}
        \label{fig:hedgedatomicswap}
    \end{figure}
    
    Consider $A$ and $B$, two parties interested in exchanging tokens \( P_a \) on $Chain\text{-}A$ and \( P_b \) on $Chain\text{-}B$, respectively. $A$ must have $p_a$ on $Chain\text{-}B$ and $B$ must have $p_b$ on $Chain\text{-}A$ as griefing premium. They must follow the following steps for a successful atomic swap exchange, as shown in Figure~\ref{fig:hedgedatomicswap}.
    \begin{enumerate}
        \item $A$ generates a secret \( s \), finds $\mathcal{H}(s)$.
        \item $A$ sends $\mathcal{H}(s)$ to B.
        \item $A$ locks her premium $p_a$ on $Chain\text{-}B$ with $\mathcal{H}(s)$ setting up a timeout of $t_1$ and refund timelock of \( t_2 \) on $Chain\text{-}B$. $B$ should lock his principal before $t_1$ on $Chain\text{-}B$.
        \item $B$ locks his premium $p_b$ on $Chain\text{-}A$ with $\mathcal{H}(s)$ setting up a timeout of \( t_4 \) and a refund timelock of $t_5$ on $Chain\text{-}A$. $t_4$ is the timeout for $A$ to lock her principal.
        \item $A$ locks her tokens on $Chain\text{-}A$ with $\mathcal{H}(s)$ before $t_4$. This locks $P_a+p_b$ until timelock $t_5$.
        \item $B$ locks his tokens on $Chain\text{-}B$ with $\mathcal{H}(s)$. This locks $P_b+p_a$ until timelock $t_2$.
        \item $A$ claims $B$'s principal $P_b$ and her premium $p_a$ on $Chain\text{-}B$ with \( s \). 
        \item $B$ uses the revealed preimage \( s \) to claim $A$'s principal $P_a$ and his premium $p_b$.
        \item If $A$ does not claim $B$'s tokens, $B$ can refund $P_b+p_a$ after \( t_2 \) expires.
        \item If $B$ does not claim $A$'s tokens, $A$ can refund $P_a+p_b$ after \( t_5 \) expires.
    \end{enumerate}

    Hedged Atomic Swap handles the two griefing cases mentioned in Section \ref{grief}. For the first case where $A$ locks $P_a$ before $t_4$ and $B$ leaves the exchange, $A$ can refund $P_a+p_b$ after $t_5$ because $B$ cannot refund his premium after $A$ locks his principal. Since $B$ has not locked his principal, $A$ can refund her premium $p_a$ after $t_1$. Thus, $B$ loses $p_b$ to $A$, disincentivizing him from leaving the exchange.
    For the second case, where $A$ refuses to claim after $A$ and $B$ lock their principal, $A$ and $B$ can be refunded after $t_5$ and $t_2$, respectively. Thus, $A$ gets $P_a+p_b$ and $B$ gets $P_b+p_a$ by refunding. To ensure that $A$ is disincentivized from leaving the exchange for this case, $p_a$ is kept larger than $p_b$. Thus, $A$ loses $p_a-p_b$ by leaving the exchange. However, this introduces the problem of griefing on premiums as the $B$ can leave the swap after $A$ locks her premium, and similarly, $A$ can leave the swap after $B$ locks his premium. This leads to the waiting until their respective refund timelock expires.

    \subsection{Grief-Free Atomic Swap}
    In 2022, Nadahalli \textit{et al.}~\cite{DBLP:conf/icbc2/NadahalliKW22} proposed Grief-free atomic swaps that guarantee no griefing while reducing the on-chain transactions to five. The idea is to combine the locking of the premium with the principal. However, it could not match the number of transactions in the TN-Swap as the combination was done only in one chain.

    \begin{figure}
        \centering
        \includegraphics[width=0.7\textwidth]{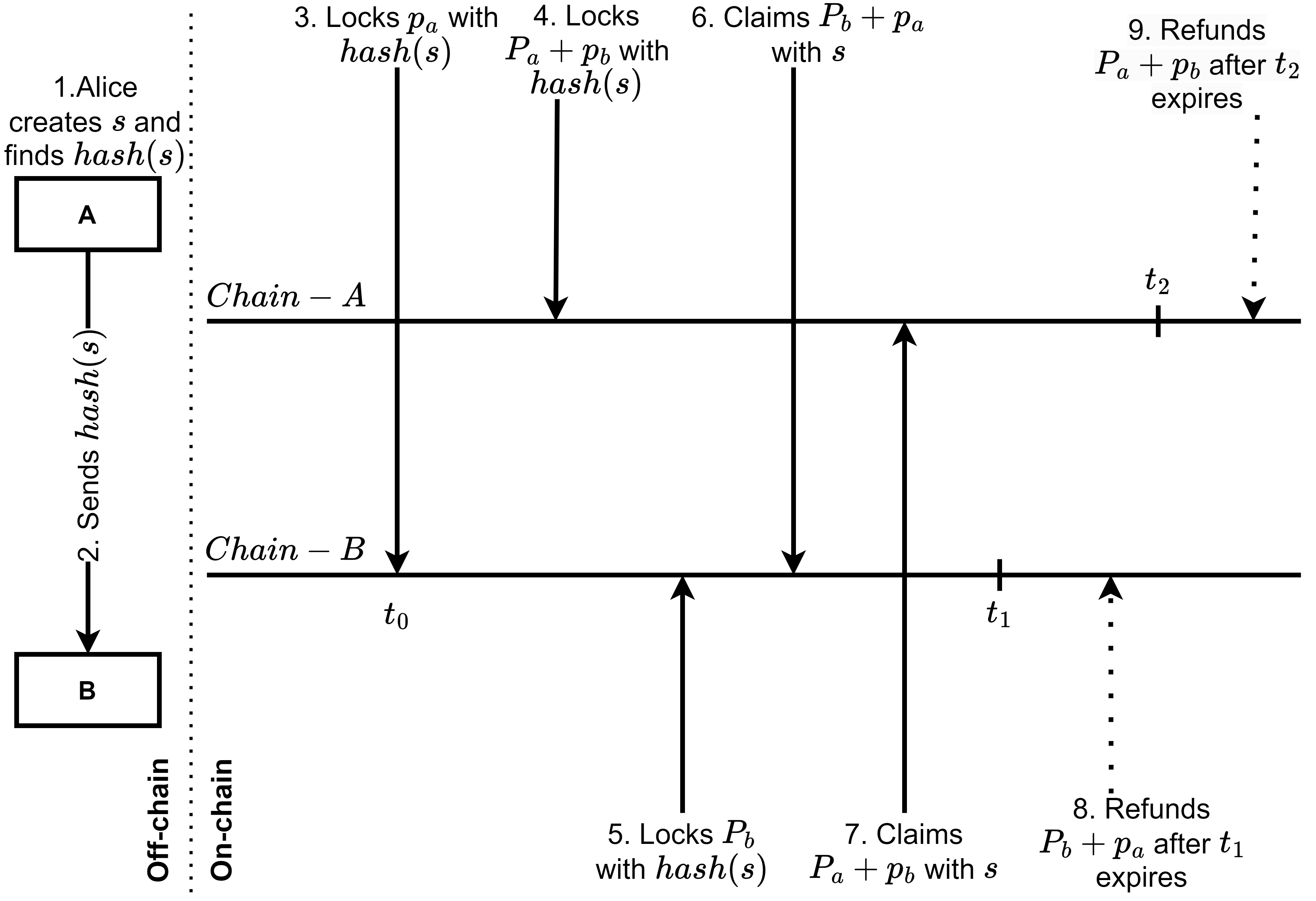}
        \caption{Grief-Free Atomic Swap Protocol}
        \label{fig:grieffreeswap}
    \end{figure}
    
    Again, consider the same setup as in the Section \ref{hedgedswap}. The following steps help achieve a grief-free atomic swap in five steps, as shown in Figure~\ref{fig:grieffreeswap}.
    \begin{enumerate}
        \item $A$ generates secret \( s \), finds $\mathcal{H}(s)$.
        \item $A$ sends $\mathcal{H}(s)$ to $B$.
        \item $A$ locks her premium $p_a$ with $\mathcal{H}(s)$ setting up a refund timelock of $t_1$ on $Chain\text{-}B$.
        \item $A$ and $B$ create a locking transaction $P_a+p_b$ and $A$ publishes it on $Chain\text{-}A$. This also sets up refund timelock $t_2$. If $B$ leaves before creating this, $A$ can refund $p_a$ by using $s$.
        \item $B$ locks his tokens with same $\mathcal{H}(s)$. This locks $P_b+p_a$ on $Chain\text{-}B$ until timelock $t_1$.
        \item $A$ claims $B$'s principal $P_b$ and her premium $p_a$ on $Chain\text{-}B$ with \( s \). 
        \item $B$ uses the revealed preimage \( s \) to claim $A$'s principal $P_a$ and his premium $p_b$.
        \item If $A$ does not claim $B$'s tokens, $B$ can refund $P_b+p_a$ after \( t_1 \) expires.
        \item If $B$ does not claim $A$'s tokens, $A$ can refund $P_a+p_b$ after \( t_2 \) expires.
    \end{enumerate}

    The GF-Swap handles the griefing case similarly to the XH-Swap; the difference is that it combines the premium with the principal and solves the griefing problem on premiums. Table~\ref{tab:atomic_swap_comparison} compares various atomic swap protocols along with 4-Swap. We now present the model and our assumptions for the 4-Swap protocol.

\begin{table}[ht]
\centering
\caption{Comparison of Atomic Swap Protocols. The column \textit{Txns} indicates the total number of on-chain transactions required by the protocol in the worst-case scenario.}
\begin{tabular}{|l|c|c|c|c|}
    \hline
    \textbf{Protocol} & \textbf{Griefing Resistance} & \textbf{Bribery Safety} & \textbf{Rev. Bribery Safety} & \textbf{Txns} \\
    \hline
    Tier-Nolan~\cite{tnswap} & No  & No  & No  & 4 \\
    Hedged~\cite{DBLP:conf/podc/XueH21}     & Yes & No  & No  & 6 \\
    Grief-Free~\cite{DBLP:conf/icbc2/NadahalliKW22} & Yes & No  & No  & 5 \\
    Rapidash~\cite{chung2022rapidash}   & No  & Yes & Yes & 6 \\
    4-Swap     & Yes & Yes & No  & 4 \\
    \hline
\end{tabular}
\label{tab:atomic_swap_comparison}
\end{table}

\section{System Model}
We assume two UTXO-based blockchains that allow multi-signature transactions. All the interacting parties, including miners, are assumed to be rational, and they try their best to increase their incentives. The transactions in both blockchains first go into a pool of unconfirmed transactions known as the mempool, from which miners then take some transactions to create a block published in the blockchain. Transactions that utilize the same UTXO (conflicting transactions) can also be published in the mempool, from which only one eventually gets into the blockchain, and the rest become invalid. We also assume synchronous communication and the presence of a global clock. A transaction may be active or inactive based on whether it can be added to a block immediately or not. Thus, the transactions that can be added to a block after some timelock are called inactive transactions; otherwise, if they can be immediately added to a block, they are called active transactions. Both blockchains allow the publishing of both active and inactive transactions, which are stored in the mempool. Additionally, a collision-resistant hash function $\mathcal{H}$ converts inputs of any length to constant-length outputs. Also, all the parties can see the state of both the blockchains along with their mempools. The 4-Swap protocol assumes that each party must hold some tokens on the chain where they want to receive the tokens to safeguard against potential attacks, as done in all related work~\cite{DBLP:conf/podc/XueH21, DBLP:conf/icbc2/NadahalliKW22}. Also, we assume that there is a significant difference in the principal and premium amount locked by the participating parties to accommodate fluctuations in the price of tokens. The premium amount is significantly higher than the transaction fees in both chains, and the value gained by exchanging the tokens is less than losing any of the premium. Moreover, the protocol does not consider the reverse bribery attacks i.e when the miners incentivizes the involved parties to publish conflicting transactions.

\subsection{UTXO Model}
\label{utxo}
We adopt the notation for UTXO-based blockchains from~\cite{9519487}, and use charts to visualize the process and clarify the transaction flow as shown in Figure~\ref{fig:transactionschema}. Transactions are represented as rounded rectangles. Transactions already confirmed on the blockchain are shown with double borders, while those not yet confirmed have single borders. Each transaction (rounded rectangle) includes one or more output boxes (with squared corners) showing the coins assigned to each output. Arrows indicate the conditions for spending them.

Typical conditions are abbreviated for simplicity. Most outputs require signatures from one or more parties, with public keys shown under the arrow. Additional conditions like timelocks are shown above the arrow. For example, a relative timelock, requiring that \( t \) rounds have passed since the transaction was published, is represented as \(+t\) above the arrow. Based on the number of rounds since the blockchain's creation, absolute timelocks are labelled \(>t\).

If the output's spending condition includes multiple options (a disjunction), the condition is written as \( \phi = \phi_1 \lor \dots \lor \phi_n \), where \( n \in \mathbb{N} \). The charts depict this using a diamond shape for the output, and each condition \( \phi_i \) is displayed on a separate arrow. It also uses $\lor$ notation on conditions to represent disjunction. In the case of conjunction, it is represented in the chart as \( \phi = \phi_1 \land \dots \land \phi_n \). Also, any transaction of the form $tx_{x}^y$ depicts action $y$ happening to funds of $x$. It does not depict who is publishing the transaction. Certain outputs can be spent by anyone labelled as ACS in Figure~\ref{fig:transactionschema} without the need for signatures. The grey square represents who can spend the output.

\begin{figure}[htbp]
    \centering
    \includegraphics[width=0.7\textwidth]{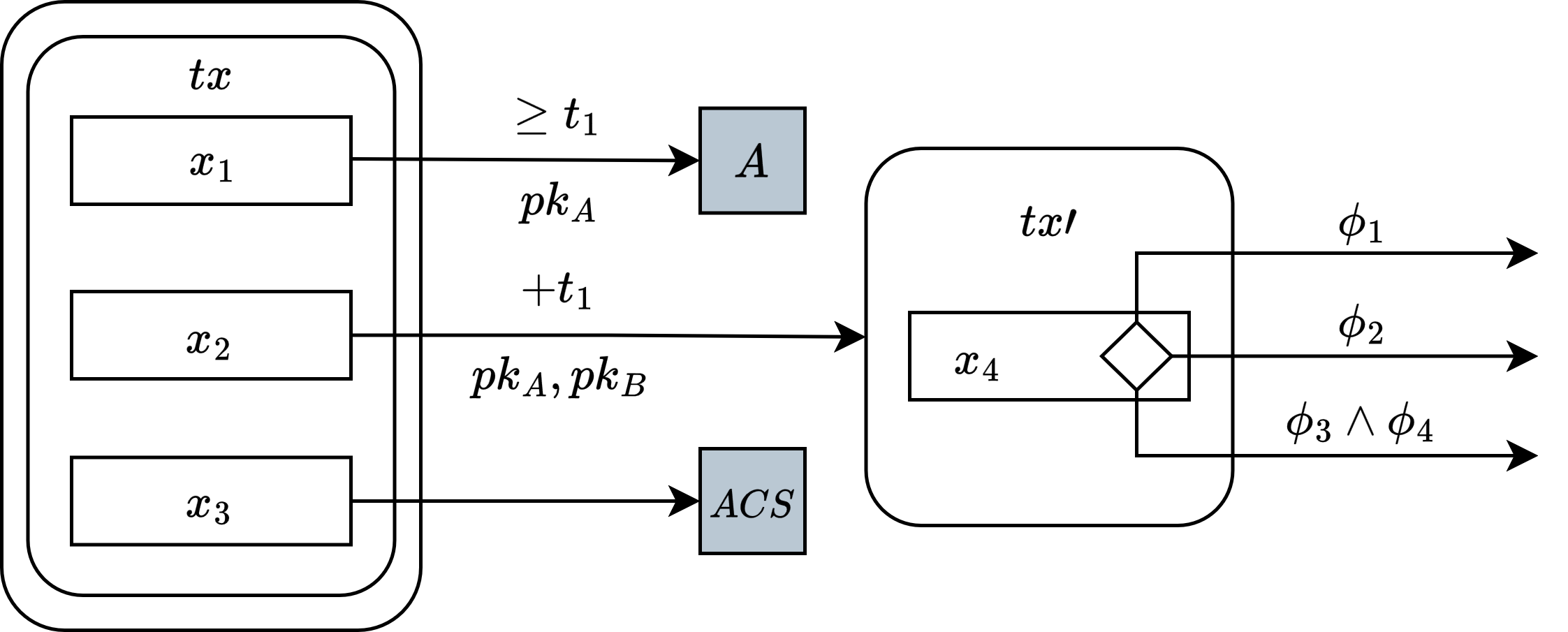 }
    \caption{Transaction Schema Representation}
    \label{fig:transactionschema}
\end{figure}

\section{4-Swap Protocol}

We present the development of the protocol through two versions, detailing the rationale behind each modification leading to the final 4-Swap protocol. The initial version (v1) focuses on facilitating grief-free atomic swaps that require only four on-chain transactions while making some assumptions. The next version then keeps the grief-free property and makes the protocol bribery safe within four steps. The notations used in the protocol are listed in Table \ref{tab:notation} for two parties $A$ and $B$ on two chains $Chain\text{-}A$ and $Chain\text{-}B$.

\begin{table}[ht]
    \centering
    \caption{Notation used in the protocol}
        \begin{tabular}{|c|c|}
            \hline
            \textbf{Notation}&\textbf{Description}\\
            \hline
            $\mathcal{H}({s})$ & Hash of some secret $s$\\
            $s_{m}$ & Main secret created by $B$\\
            $P_a$ & Principal amount $A$ wants to exchange on $Chain\text{-}A$\\
            $p_a$ & Griefing premium of $A$ on $Chain\text{-}B$\\
            $x_a$ & Bribery premium of $A$ on $Chain\text{-}A$\\
            $P_b$ & Principal amount $B$ wants to exchange on $Chain\text{-}B$\\
            $p_b$ & Griefing premium of $B$ on $Chain\text{-}A$\\
            $x_b$ & Bribery premium of $B$ on $Chain\text{-}B$\\
            $t_{i}$ & Timestamps in the chains\\
            $tx_x^y$ & Transaction that performs action $y$ on tokens of $x$\\
            $s_{r1}$ & Refund secret created by $A$\\
            $s_{r2}$ & Refund secret created by $B$\\
            $s_{br}$ & Bribery secret created by $A$\\
            $s_{e}$ & Early execution secret created by $A$\\
            \hline
        \end{tabular}
    \label{tab:notation}
\end{table}

\subsection{4S-v1}
In this initial version (v1), we relax the constraints typically associated with transaction processing by assuming that transactions are added to the blockchain immediately upon publication, with zero delay. This assumption is essential for considering conflicting transactions using the same UTXO. It ensures that the first active transaction put on the blockchain amongst many active or inactive conflicting transactions gets into it and can never be censored, eliminating the possibility of griefing. The transaction schema of $4S-v1$ is provided in Figure~\ref{4sv1_utxo}.

\begin{figure}
\centering
\includegraphics[width=0.7\textwidth]{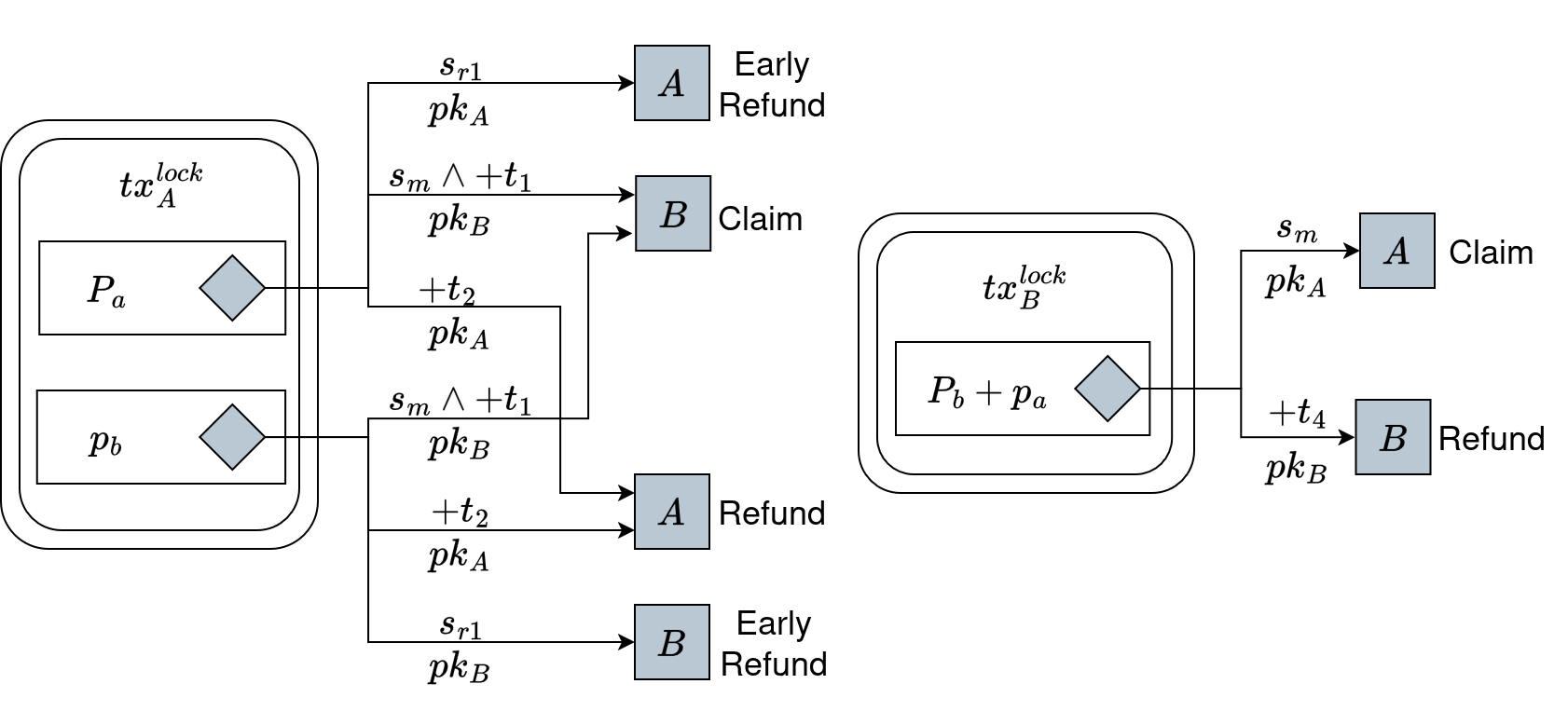}
\caption{Transaction Schema for $4S\text{-}v1$} \label{4sv1_utxo}
\end{figure}

Atomic swaps use griefing premiums to prevent griefing attacks, but locking these premiums separately increases the number of on-chain transactions~\cite{DBLP:conf/icbc2/NadahalliKW22}. To reduce this, we combine the premium with the principal amount, limiting the total number of on-chain transactions to four. The combination is done crosswise: A's premium is locked with B's principal, and B's premium is locked with A's principal, ensuring the premium is returned during the claim transaction. The premium structure is designed such that the party claiming first has to pay a higher griefing premium, as it can cause grief to the counterparty by not claiming. In the first case of griefing, if a party quits after the counterparty has already locked, the counterparty can refund its principal and the quitter's premium, thereby penalizing the quitter. However, in the second case, if a party quits before claiming (after both have locked), both parties can refund each other's premiums after the refund timelock. Thus, the first claimer has to lock a higher premium to disincentivize it from quitting the exchange.

We introduce cross-publishing of locking transactions and an option for an early refund for the party whose tokens get locked first. Cross-publishing means only B can publish A's locking transaction, and only A can publish B's. Cross-publishing helps to let the party know that their locking transaction is put on the chain by the counterparty, and they should do something based on it if they have the counterparty's locking transaction. Suppose B publishes A's locking transaction; if A has B's locking transaction, he should publish it; otherwise, he should immediately do an early refund on the locking transaction B put on the chain. If A makes an early refund of the principal amount, B should also be able to make the refund on the premium locked in the locking transaction; thus, another secret $s_{r1}$, which, when revealed during the early refund of A's principal amount, helps B to early refund his premium. $s_{r1}$ just indicates that $A$ is trying to refund his tokens. To ensure that A gets enough time to make an early refund, we introduce a new timelock $t_1$ as shown in Figure~\ref{4sv1_flow}, which helps to restrict B from claiming until the timelock expires. Cross-publishing also eliminates the first case of griefing, where a party leaves after the counterparty locks the tokens. If A leaves after B publishes A's locking transaction, B can claim A's principal after $t_1$. Also, suppose A publishes B's locking transaction first by moving away from the protocol. In that case, B can either put A's locking transaction or wait until refund timelock to get back the principal and A's griefing premium. Thus, the protocol is grief-free, as shown in Lemma ~\ref{griefinglemma}. We define the 4-Swap v1 protocol in three phases of setup, initiation and redeeming as in~\cite{DBLP:conf/sp/TsabaryYME21}.

The protocol 4S-v1 describes an atomic swap between participants $A$ and $B$, with $B$ initiating the process. The principal amounts are denoted as $P_a$ for $A$ and $P_b$ for $B$, while $p_a$ and $p_b$ represent the respective griefing premium amounts. $t_1$ is the timelock for the early refund on $Chain\text{-}A$ and $t_2$, $t_4$ are the refund timelocks for $Chain\text{-}A$ and $Chain\text{-}B$ respectively. The following are the steps:

\begin{enumerate}
    \item \textbf{Setup Phase}
    \begin{enumerate}
        \item $B$ generates a secret \( s_m \), finds $\mathcal{H}(s_m)$ and sends $\mathcal{H}(s_m)$, UTXO $p_b$ to $A$.
        \item $A$ generates a secret \( s_{r1} \), finds $\mathcal{H}(s_{r1})$, creates multi-sig transaction $tx_{A}^{lock}$ that locks $P_a+p_b$ with $\mathcal{H}(s_m)$ and sets up two refund timelocks $t_1$(early refund) and $t_2$. $s_{r1}$ is required if $A$ wants to refund the locked tokens early. $A$ then sends $tx_{lock,A}$ (only signed by $A$), $\mathcal{H}(s_r1)$ and UTXO $p_a$ to $B$.
        \item $B$ creates a multi-sig transaction $tx_{B}^{lock}$ that locks $P_b+p_a$ with $\mathcal{H}(s_m)$ and sets up refund timelock $t_4$. $B$ then sends $tx_{lock,B}$ (only signed by $B$) to $A$. Nothing has been published in either of the chain until now.
        
    \end{enumerate}
    \item \textbf{Initiation Phase}
    \begin{enumerate}
        \item $B$ publishes $tx_{A}^{lock}$ to $Chain\text{-}A$, which is immediately included in a block.
        \item $A$ then publishes $tx_{B}^{lock}$ to $Chain\text{-}B$ which is immediately included in a block.
    \end{enumerate}
    \item \textbf{Redeeming Phase}
    \begin{enumerate}
        \item If $A$ creates and publishes a transaction for early refunding $P_a$ using $s_{r1}$, B can also publish a transaction for early refunding $p_b$ using the revealed $s_{r1}$. 
        \item If $A$ does not make an early refund, $B$ waits for $t_1$ to expire and creates and publishes a transaction to claim $P_a$ and $p_b$.
        \item If $B$ publishes $tx_{A}^{claim}$, then $A$ also creates and publishes $tx_{B}^{claim}$ to claim $P_b$ and $p_a$
        \item If $B$ doesn't claim by $t_2$, $A$ creates and publishes a refund transaction to get $P_a + p_b$.
        \item If $A$ doesn't claim by $t_4$, $B$ creates and publishes a refund transaction to get $P_b + p_a$.
    \end{enumerate}
\end{enumerate}

\begin{figure}
\centering
\includegraphics[width=0.7\textwidth]{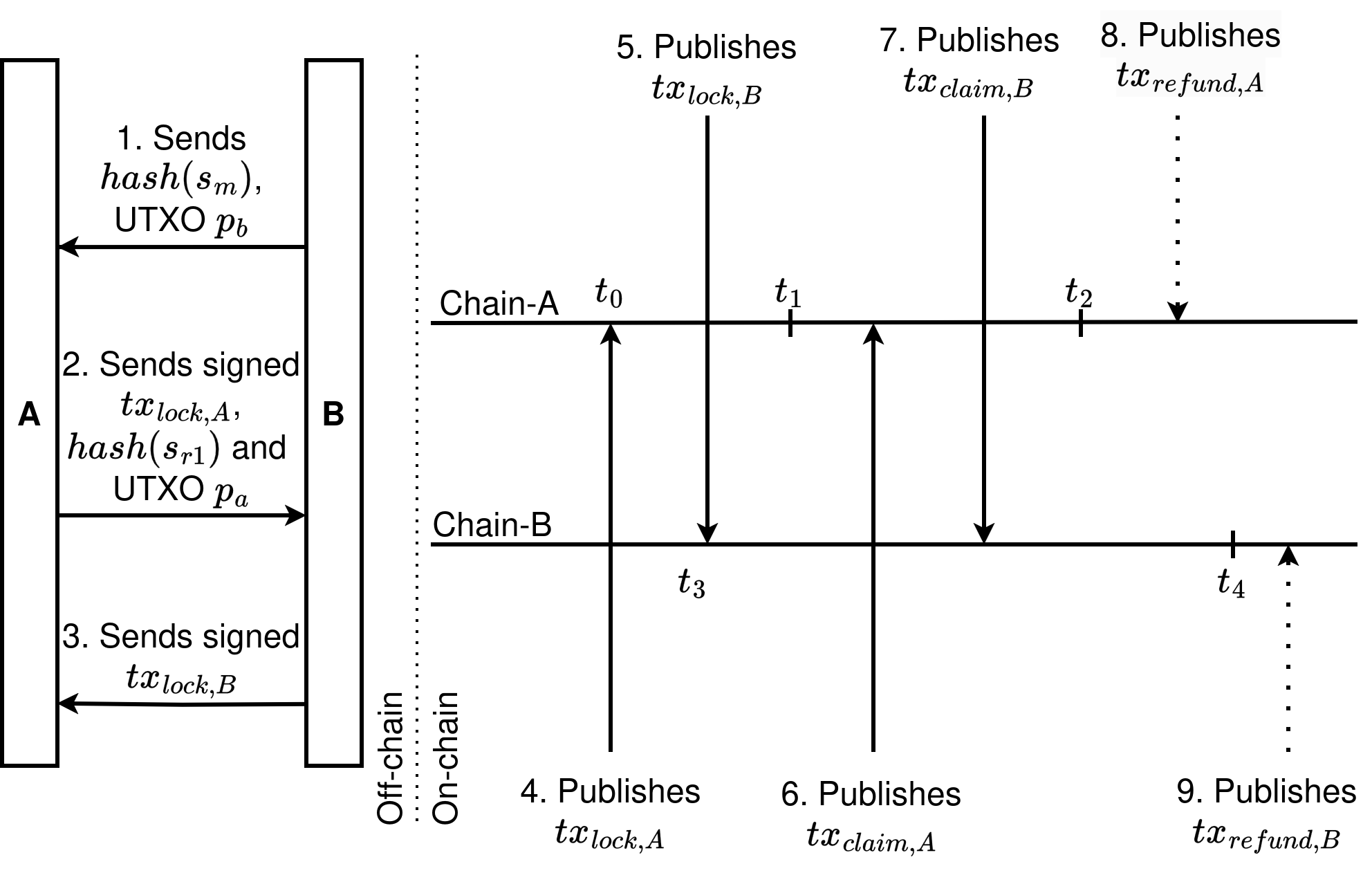}
\caption{Transaction flow for \(4S\text{-}v1\) and \(4S\), with \(4S\) differing slightly in off-chain interactions, as detailed in Section \ref{4Sprotocoldetails}.} \label{4sv1_flow}
\end{figure}

\begin{lemma}[Griefing Lemma]
    \label{griefinglemma}
    If any party abandons the protocol after tokens of the counterparty are locked on either of the two blockchains, the abandoning party incurs a penalty.
\end{lemma}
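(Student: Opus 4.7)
The plan is to proceed by exhaustive case analysis on the state of the two blockchains at the moment of abandonment. By the protocol, counterparty tokens become locked in only two regular states: (I) $tx_{A}^{lock}$ is confirmed on $Chain\text{-}A$ but $tx_{B}^{lock}$ is not yet confirmed on $Chain\text{-}B$, and (II) both $tx_{A}^{lock}$ and $tx_{B}^{lock}$ are confirmed. Any off-protocol variant, such as $A$ unilaterally publishing $tx_{B}^{lock}$ before $B$ acts, can be folded into one of these two by symmetry. For each state I treat the abandonment of $A$ and of $B$ separately, interpreting ``abandon'' as refusing to publish any further transaction, and I exhibit a counterparty strategy that drives the abandoner to a strictly negative net outcome.

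In State~I, if $A$ abandons, $B$ waits until the early-refund timelock $t_1$ elapses and uses $s_m$ to claim the full output $P_a+p_b$ from $tx_{A}^{lock}$; because $A$ never reveals $s_{r1}$, no racing early-refund can contend, so $A$'s loss equals her principal $P_a$. Conversely, if $B$ abandons in State~I, the key move is that rational $A$ deliberately refrains from publishing $tx_{B}^{lock}$ (doing so would lock her own $p_a$ without hope of a claim) and instead waits for the refund timelock $t_2$ to recover $P_a+p_b$ on $Chain\text{-}A$. A subtlety here is to rule out the early-refund path, which would reveal $s_{r1}$ and let $B$ recover $p_b$: since $A$ strictly prefers to punish $B$, she chooses the regular refund, and $B$ loses the strictly positive premium $p_b$.

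In State~II, if $A$ abandons, $B$ can still claim $P_a+p_b$ using $s_m$ after $t_1$ without any cooperation from $A$; because $A$ never claims $tx_{B}^{lock}$, $B$ subsequently refunds $P_b+p_a$ after $t_4$, and $A$ sustains a total loss of $P_a+p_a$. If instead $B$ abandons in State~II, both parties can only refund: $A$ refunds $P_a+p_b$ after $t_2$ and $B$ refunds $P_b+p_a$ after $t_4$, giving $B$ a net change of $p_a-p_b$, which is strictly negative by the premium inequality $p_b>p_a>0$ dictated by the protocol design (the first claimer, namely $B$, locks the larger premium).

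The principal obstacle is the interplay between the early-refund escape hatch and the timelock ordering $t_1<t_2<t_4$. I must verify that under the zero-delay assumption of $4S\text{-}v1$ the defender's preferred transaction is always confirmed ahead of any conflicting move by the abandoner, that the early-refund option cannot be abused to let the abandoner sidestep the penalty (in particular in State~I when $B$ abandons), and that no alternative deviation---partial compliance, delayed publication, or an attempted stalemate---yields the abandoner a non-negative outcome. Combined with the premium inequality $p_b>p_a>0$ and the assumption that premiums strictly exceed transaction fees, these verifications close the case analysis and yield the lemma.
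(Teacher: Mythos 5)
Your decomposition and most of your cases coincide with the paper's proof: the paper likewise argues by case analysis, treating (i) $A$ abandoning after $B$ has published $tx_{A}^{lock}$, where $B$ claims $P_a+p_b$ after $t_1$ using $s_m$ and $A$ loses $P_a$, and (ii) $B$ abandoning after both locks are confirmed, where both parties refund and $B$ nets $p_a-p_b<0$ by the premium inequality $p_b>p_a$. Your State~II-with-$A$-abandoning case ($A$ loses $P_a+p_a$) is a sound and harmless extension. The genuine gap is your State~I-with-$B$-abandoning case, and specifically the ``key move'' you propose there.

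In 4-Swap, $B$ generates the main secret $s_m$, and the claim path on $tx_{A}^{lock}$ opens at $t_1<t_2$. So the strategy you prescribe for $A$ --- deliberately skipping the early refund and waiting until $t_2$ to collect $P_a+p_b$ --- is exactly the behaviour the protocol is designed to rule out: any $B$ who is not literally inert simply publishes the pre-signed $tx_{A}^{claim}$ at $t_1$ and takes $P_a+p_b$, so $A$'s waiting strategy risks her entire principal to extract a premium. Your argument only closes under your stipulated reading of ``abandon'' as ``never publishes any transaction again,'' which is incompatible with the paper's model of rational parties; indeed the paper's own base game tree shows that $A$'s best response at this history is the \emph{immediate} early refund, after which $B$ recovers $p_b$ via the revealed $s_{r1}$ and the ``penalty'' to $B$ degenerates to transaction fees. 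This is precisely why the paper's proof confines itself to the two scenarios above rather than claiming a premium-sized penalty in this corner: your Case~I-$B$ asserts a stronger punishment than the protocol actually delivers, and it does so via a defender strategy that is strictly dominated. To repair the proposal, either exclude this configuration from the lemma's scope (as the paper implicitly does, since $A$ can always exit safely before her tokens are at risk) or weaken the claimed penalty in that case to fees, rather than resting the case on $A$ waiting out $t_2$.
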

\begin{proof}
    Griefing can occur in two scenarios: first, if party \( B \) publishes party \( A \)'s locking transaction, but \( A \) abandons the protocol, and second, if \( B \) abandons the protocol after both locking transactions are confirmed. In the first case, after the timelock \( t_1 \), \( B \) can submit a claim transaction because it knows the secret \( s \), thereby retrieving the total amount \( P_a + p_b \), where \( P_a \) and \( p_b \) are the tokens locked by \( A \) and \( B \), respectively. Consequently, \( A \) incurs a loss of \( P_a \). In the second case, if \( B \) abandons the protocol after both locking transactions are confirmed, \( A \) can refund its locked tokens \( P_a + p_b \) after the timelock \( t_2 \), but \( A \) still incurs a penalty of \( p_a \), the tokens it initially locked. To discourage \( B \) from abandoning the protocol, the value of \( p_b \) is set to be greater than \( p_a \), ensuring that \( B \) faces a larger penalty in such scenarios. This mechanism ensures the protocol's resilience against griefing attacks.

\end{proof}

\subsection{4-Swap Protocol Details}
\label{4Sprotocoldetails}
In this section, we remove the assumption of zero delay in adding transactions to the blockchain after publication. In practice, transactions first enter the mempool and are confirmed only when miners include them in a block. To address this delay and ensure protocol security, we incorporate the concept of mutual destruction from~\cite{DBLP:conf/sp/TsabaryYME21}. This approach disincentivizes participants from publishing conflicting transactions(bribery scenarios) in the mempool. However, we do not address reverse bribery attacks, as seen in HTLCs~\cite{DBLP:conf/ndss/WadhwaSZN23} because it would increase the total on-chain transactions.

Mutual destruction works by embedding secrets within conflicting transactions. If these secrets enter the mempool together, any participant (typically miners) can use them by creating a transaction claiming the entire locked amount, making both conflicting transactions invalid. This mechanism discourages parties from attempting bribery. To implement this in the 4S protocol, we introduce the following two secrets in addition to $s_{r1}$ and the main secret $s_m$:
\begin{itemize}
    \item \( s_{br} \): $A$ creates $s_{br}$ for claiming $B$'s tokens. It helps the protocol to know that $A$ is claiming.
    \item \( s_{r2} \): $B$ creates $s_{r2}$ for refunding its tokens. It helps the protocol to know that $B$ is refunded.
\end{itemize}

Thus, whenever a pair of claim and refund transactions are published together in the blockchains, miner can create their slashing transaction to claim all the locked tokens. The transaction schema and all possible redeem paths for 4-Swap are shown in Figure~\ref{4s_utxo} and Figure~\ref{4s_utxoB}. This also covers the bribery case where an early refund transaction of $A$ is censored by claim transaction of $B$ ($p_b$ works as bribery premium of $B$).

\begin{figure}[htbp]
\includegraphics[width=0.9\textwidth]{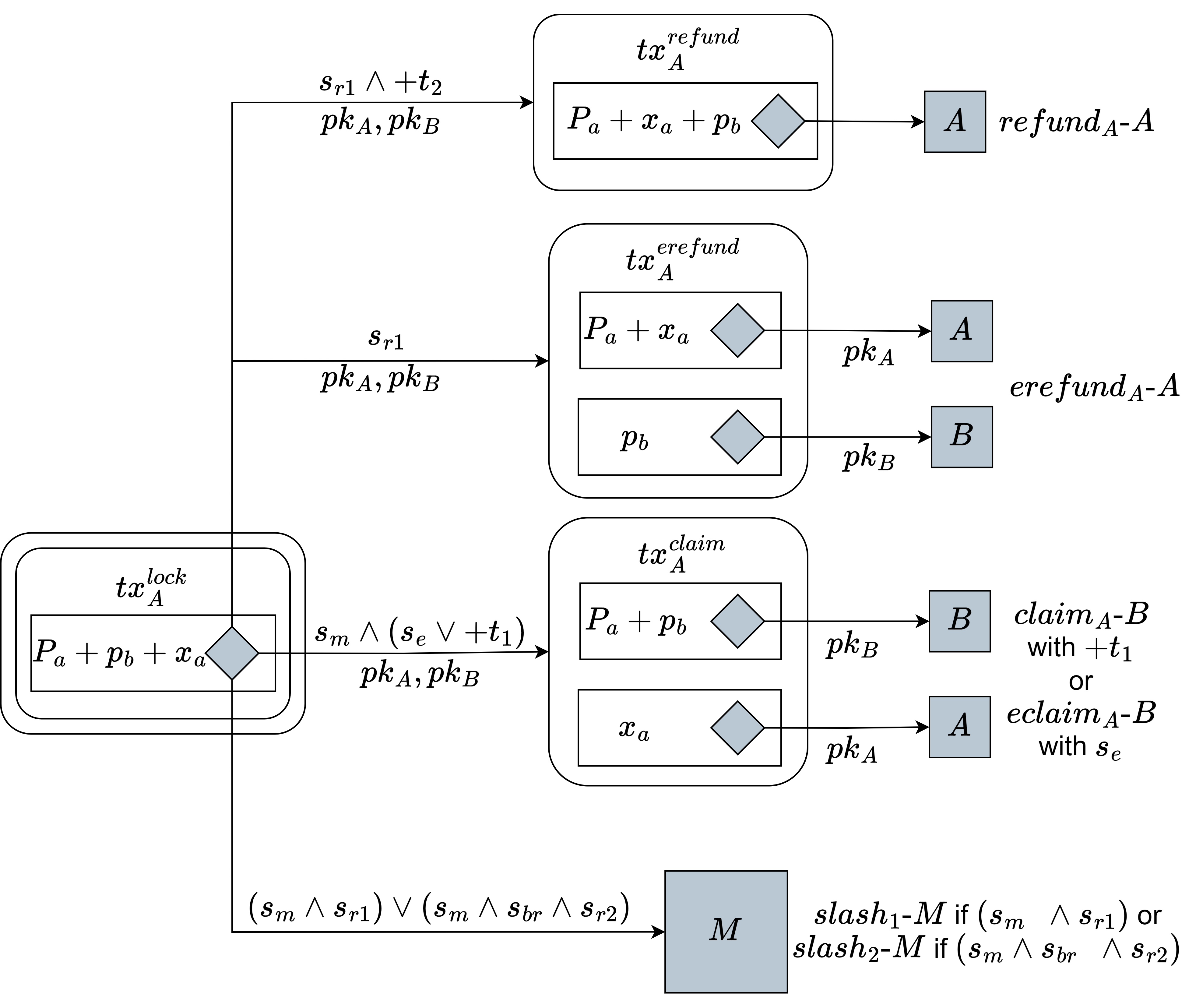}
\caption{Transaction Schema of ${4S}$ for $A$} \label{4s_utxo}
\end{figure}

\begin{figure}[htbp]
\includegraphics[width=0.9\textwidth]{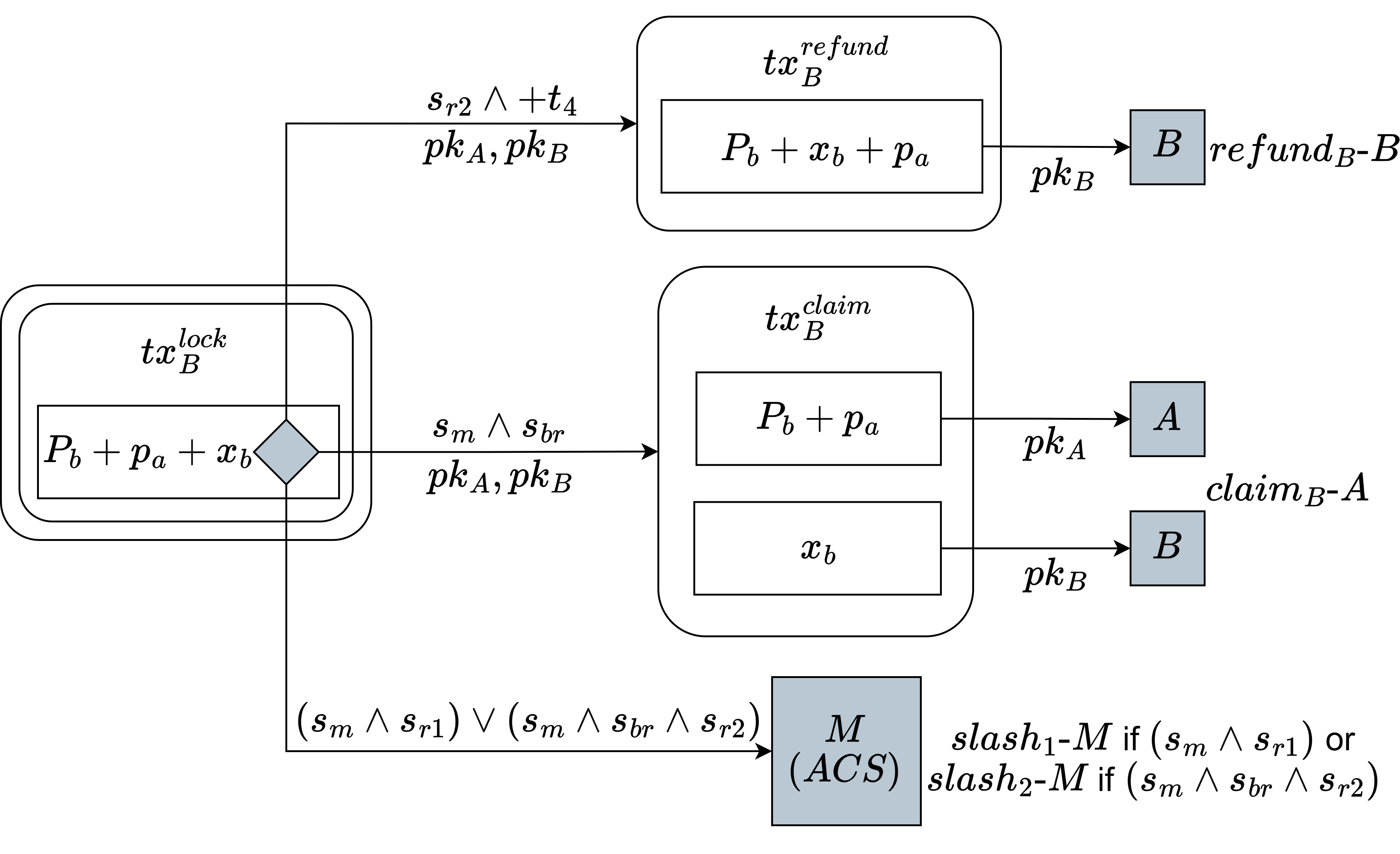}
\caption{Transaction Schema of ${4S}$ for $B$} \label{4s_utxoB}
\end{figure}

Additionally, in the setup shown in Figure~\ref{4sv1_flow}, even when both parties act honestly, $B$ must wait for \( t_1 \) to expire before claiming the funds. We introduce another secret, \( s_e \), generated by A for early execution. Once both locking transactions are confirmed on-chain, $A$ can share \( s_e \) with $B$, allowing $B$ to claim the funds without waiting for \( t_1 \) to expire.

However, griefing premiums alone are insufficient to address the bribery issue due to the following scenarios:
\begin{itemize}
    \item After $B$ successfully claims $A$'s tokens in \( Chain\text{-}A \), $B$ may attempt to bribe miners by publishing conflicting refund transactions, as $B$ no longer has anything at risk.
  
    \item $B$'s claim transaction remains in the mempool, while $A$ observes the corresponding secret from the mempool of \( Chain\text{-}A \). If $A$'s claim transaction is confirmed on \( Chain\text{-}B \) first, $A$ may attempt to bribe miners by submitting a conflicting refund transaction since $A$ similarly has nothing to lose.
\end{itemize}

The protocol introduces two bribery premiums to mitigate the abovementioned risks: \( x_a \) for party $A$ and \( x_b \) for party $B$. These premiums are not cross-locked, meaning each party must contribute an additional amount alongside their principal to be a bribery premium. Moreover, we incorporate static claim transactions in the swap. In the setup phase, both parties jointly create their claim transactions, which cannot be altered during the protocol's runtime. These static claim transactions facilitate the exchange of principal amounts while returning the bribery premiums within the same transaction. This approach addresses the bribery risks in the abovementioned scenarios. If a party attempts bribery, they risk losing their bribery premium if the counterparty cannot claim it due to the bribery attempt. We also create static early refund and refund transactions for party $A$ on $Chain\text{-}A$ and static refunds for $B$ on $Chain\text{-}B$ to reduce the overhead of multiple redeem paths. $slash_1$ and $slash_2$ in Figure~\ref{4s_utxo} and Figure~\ref{4s_utxoB} represent the redeem paths available to miners on individual locking transactions in case of bribery. The protocol flow is the same as Figure~\ref {4sv1_flow}, but just in off-chain interactions, the claim, refund and early refund transactions are created and signed before signing the locking transactions. When a party(miner) uses $s_m, s_{r1}$ to slash, it comes under $slash_1$, otherwise if it slashes by $s_m,s_{br}, s_{r2}$ it comes under $slash_2$. We now present the $4S$ protocol in three phases.

Protocol $4S$ defines an atomic swap between participants $A$ and $B$, initiated by $B$. Principal amounts are $P_a$ for $A$ and $P_b$ for $B$, with griefing premiums $p_a$ and $p_b$, and bribery premiums $x_a$ and $x_b$, respectively. Figure~\ref{4s_utxo} and Figure~\ref{4s_utxoB} shows the detailed construction of the two locking transactions and how the subsequent transactions will be created. The following are the required steps:

\begin{enumerate}
    \item \textbf{Setup Phase}
    \begin{enumerate}
        \item $B$ generates secrets $s_m$, $s_{r2}$, finds $\mathcal{H}(s_m), \mathcal{H}(s_{r2})$ and sends $\mathcal{H}(s_m)$, UTXO $p_b$ to $A$.
        \item $A$ generates secrets $s_{r1}$, $s_e$, $s_{br}$, finds $\mathcal{H}(s_{r1})$, $\mathcal{H}(s_{e})$, $\mathcal{H}(s_{br})$, creates multi-sig transaction $tx_{A}^{lock}$ that locks $P_a+p_b$ with all hashes with different conditions and sets up two refund timelocks $t_1$(early refund) and $t_2$. $s_{r1}$ is required if $A$ wants to refund the locked tokens early. $A$ also creates multi-signature transactions $tx_{A}^{claim}$, $tx_{A}^{erefund}$, $tx_{A}^{refund}$ using $tx_{A}^{lock}$ and signs it. The $tx_{A}^{claim}$, $tx_{A}^{erefund}$ and $tx_{A}^{refund}$ are completely signed before signing $tx_{A}^{lock}$. $A$ then sends $tx_{A}^{lock}$ (only signed by $A$), $tx_{A}^{claim}$, $\mathcal{H}(s_{r1})$, $\mathcal{H}(s_{e})$, $\mathcal{H}(s_{br})$ and UTXO $p_a$ to $B$.
        \item $B$ creates a multi-sig transaction $tx_{B}^{lock}$ that locks $P_b+p_a$ with all hashes with different conditions and sets up refund timelock $t_4$. $B$ also creates a multi-signature transaction $tx_{B}^{claim}$ and $tx_{B}^{refund}$  using $tx_{B}^{lock}$ and signs it. $B$ then sends the $tx_{B}^{lock}$ and $tx_{B}^{claim}$ to $A$. The $tx_{B}^{claim}$ and $tx_{B}^{refund}$ are completely signed before signing $tx_{B}^{lock}$. $B$ then sends $tx_{lock,B}$ (signed by $B$) to $A$. Nothing has been published in either chain until now.
        
    \end{enumerate}
    \item \textbf{Initiation Phase}
    \begin{enumerate}
        \item $B$ publishes $tx_{A}^{lock}$ to $Chain\text{-}A$.
        \item $A$ then publishes $tx_{B}^{lock}$ to $Chain\text{-}B$.
    \end{enumerate}
    \item \textbf{Redeeming Phase}
    \begin{enumerate}
        \item If $A$ and publishes $tx_{A}^{erefund}$ for early refunding $P_a+x_a$ using $s_{r1}$, $B$ also gets back $p_b$. 
        \item If $tx_{A}^{erefund}$ is not published, $B$ waits for $t_1$ to expire and publishes $tx_{A}^{claim}$ to claim $P_a$ and $p_b$. This also refunds $x_a$ to $A$.
        \item $B$ can publish without waiting if $A$ releases the early execution secret $s_e$ to $B$.
        \item If $B$ publishes $tx_{A}^{claim}$, then $A$ also creates and publishes $tx_{B}^{claim}$ to claim $P_b$ and $p_a$. this also refunds $x_b$ to $B$.
        \item If $B$ does not claim before $t_2$, $A$ creates and publishes $tx_{A}^{refund}$ to get $P_a$, $p_b$ and $x_a$.
        \item If $A$ does not claim before $t_4$, $B$ creates and publishes $tx_{B}^{refund}$ to get $P_b$, $p_a$ and $x_b$.
    \end{enumerate}
\end{enumerate}

Reducing the on-chain transactions to four helps in reducing the overall time taken by the exchange in both the worst case as well as for the best case. For the best case, the parties follow the protocol and the early execution secret $s_e$ is also shared. The protocol thus just requires four transactions, each followed by some confirmation delay. For the worst case, if party $B$ leaves after both the locking transactions have been published, both parties have to wait until the refund timelocks expire in both the chains to refund their tokens. The refund timelocks entered will be smaller than the refund timelocks in the earlier works~\cite{DBLP:conf/podc/XueH21, DBLP:conf/icbc2/NadahalliKW22} as it requires fewer on-chain transactions.

The security analysis of the protocol is detailed in Section~\ref{securityanalysis}. We now present a game-theoretical analysis of the 4-Swap protocol to ensure that the protocol remains in the equilibrium state in the presence of rational parties and miners

\section{Game Theoretical Analysis}
The protocol is modelled as an Extensive Form Perfect Information Game, where all parties know the latest states of blockchains and their mempools~\cite{aumayr2024securing}. The game consists of three phases: Setup, Claim, and Refund. The Setup phase covers the protocol's setup and locking steps, while the Claim and Refund phases focus on claiming and refunding locked funds.

The game involves three rational players: \(A\), \(B\), and the miner \(M\). The miner \(M\) is initially assumed to be the strongest and only miner on both blockchains, and is responsible for mining all blocks. Since \(M\) controls the blockchain, it can always accept bribes to censor transactions. However, we demonstrate that the protocol effectively incentivizes honest behaviour, ensuring that bribery is not rational for any party.

The game progresses in rounds. In each round, \(A\) and \(B\) choose their actions, after which the miner \(M\) confirms transactions and advances the game to the next round. The complete game tree is constructed by combining the trees for the Base, Claim, and Refund trees in Figure~\ref{fig:gamesetup}, Figure~\ref{fig:gameclaim} and Figure~\ref{fig:gamerefund}.

Initially, we analyze the protocol under the assumption of a single powerful miner, \(M\). However, we assume this miner $M$ does not intentionally delay the confirmation of transactions without bribery. We later show that this is true in the case of a multi-miner setup because of opportunity cost. We then generalize the model to a multi-miner setup across both chains. The analysis shows that the actions and utilities of \(A\) and \(B\) remain consistent even in the multi-miner setting, illustrating the protocol's robustness and ability to maintain incentives for honest participation under varying conditions.

\begin{definition}[Perfect Information Extensive Form Game]
A game with perfect information in extensive form can be described as a tree structure and is formally defined by the tuple 
\((N, H, P, A_i, u_i)\) where \(i \in N\):

\begin{itemize}
    \item $N$: A finite set of $n$ players, labeled as $N = \{1, 2, \dots, n\}$. Each decision point in the game that is not terminal is associated with a specific player $ I \in N$, responsible for choosing that stage.
    
    \item $H$: The collection of all possible action sequences, called histories, where each sequence $h$ leads to a specific node within the game tree. A subset $Z \subseteq H$ represents terminal histories, which correspond to the endpoints of every possible playthrough, also known as the tree's leaf nodes.
    
    \item $P$: A mapping called the player function, which determines the player assigned to decide each non-terminal history $h \in H \setminus Z$.
    
    \item $A_i$: A function that specifies the set of actions available to a player $i$ at a given history $h$, denoted as $A_i(h)$. This set represents the choices available to player \(i\) at that point in the game. Each branch extending from a node in the tree corresponds to one of these actions.
    
    \item $u_i$: A function representing the payoffs for each player $i$, defined as $u_i: Z \to \mathbb{R}$. This function assigns a real number to each terminal history $z \in Z$, which reflects the utility that player $i$ receives if the game ends at that particular terminal state.
    
\end{itemize}    
\end{definition}

The parties ${A, B, M}$ will resort to different actions at different game histories based on the utilities they will get from the actions. A strategy for a party is the sequence of actions it will take as the game proceeds based on the actions already taken by all the parties. The set of all such strategies of all the players is known as the strategy profile. We now formally present the definition of strategy profile.

\begin{definition}[Strategy Profile]
    In an extensive form game with perfect information, a strategy profile specifies the action $a \in A_i(h)$ chosen by each player $i \in N$ at every history $h$ when required to act. Specifically, for each player $i \in N$, a strategy $s_i$ is a function mapping the set of histories $H_i = \{ h \in H: P(h) = i \}$ to the set of actions $A_i$. This mapping satisfies $s_i(h) \in A_i(h)$ for all $h \in H_i$. A strategy profile is a collection of strategies for all players,  $s = (s_1, s_2, \dots, s_n)$.
\end{definition}

Since the game is a perfect information game, all the parties know the best action for the other parties. Since parties are rational, they will try to increase their utility in each subgame of the main game. Subgame Perfect Nash Equilibrium gives the path of the game from which the parties will not deviate, as it gives them the best utility. We now define the Subgame-Perfect Nash Equilibrium.

\begin{definition}[Subgame Perfect Nash Equlibrium]
    A strategy profile  $s^*$ $=$ $($$s_1^*$, $s_2^*$, $\dots$, $s_n^*$$)$ qualifies as a Subgame Perfect Nash Equilibrium (SPNE) if, in every subgame $G'$ of the game $G$, each player's strategy $s_i^*$ is the optimal response to the strategies of all other players. 
    
    More specifically, let $H'$ represent the set of all histories within the subgame $G'$. For each player $i$, the strategy $s_i^*$ is deemed optimal in $G'$ if the following condition holds: $
u_i(s_i^*, s_{-i}^*; h) \geq u_i(s_i, s_{-i}^*; h)$, for all strategies $s_i$ available to player $i$ in $G'$, and for every history $ h \in H'$. Here, $s_{-i}^*$ represents the strategies of all players other than $i$ in the SPNE, and $u_i$$($$s_i$, $s_{-i}^*$$;$ $h$$)$ denotes player $i$'s payoff when the strategy profile $(s_i, s_{-i}^*)$ is followed in the subgame starting at history $h$. A strategy profile is an SPNE if it ensures a Nash Equilibrium in every subgame.
\end{definition}

We first define the slashing lemma and then prove the game-theoretic safety of 4-Swap.

\begin{lemma}[Slashing Lemma]
    \label{slashinglemma}
    Slashing is an action available to the miner $M$, which can take away all the locked tokens of the parties $A$ and $B$. For any $h \in H_M$ of subgame $G'$ of our game where $M$ is miner, if $\exists$ $a=slash \in A_M(h)$, then
    $u_M(h, slash) > u_M(h, !slash)$.
\end{lemma}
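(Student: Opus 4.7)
The plan is to argue by direct comparison of the two utilities available to $M$ at any history $h \in H_M$ where the action $\text{slash}$ is in $A_M(h)$. First, I would invoke the construction in Figures~\ref{4s_utxo} and~\ref{4s_utxoB}: a slashing output ($slash_1$ or $slash_2$) becomes spendable exactly when the corresponding pair or triple of preimages ($s_m, s_{r1}$ in the early refund conflict; $s_m, s_{br}, s_{r2}$ in the claim/refund conflict) has been revealed by transactions simultaneously present in the mempool. Since both blockchains and their mempools are publicly observable under the system model, whenever $a=\text{slash}\in A_M(h)$ holds, $M$ can in fact extract all required preimages and assemble a valid spending transaction paying the slashed amount to its own address.

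Next I would quantify the two utilities. Under $\text{slash}$, $M$ includes its own transaction in the next block and thereby collects the entire locked value on the relevant chain — either $P_a + p_b + x_a$ (on $Chain\text{-}A$) or $P_b + p_a + x_b$ (on $Chain\text{-}B$) — in addition to the ordinary block reward. Under $\neg\text{slash}$, the best $M$ can do is confirm one of the conflicting transactions left in the mempool, earning only that transaction's fee, plus any bribe attached to it by $A$ or $B$. The system-model assumption that the griefing and bribery premiums are significantly higher than transaction fees in both chains is exactly what forces the slashing reward to dominate any fee-plus-bribe the miner could plausibly receive from a conflicting transaction, since the maximum such bribe is bounded above by the value a briber would rationally stake, which is at most its own principal plus premium — all of which is already captured inside the slashed output.

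The main obstacle is making the comparison uniform over every history at which $\text{slash}$ is enabled, because there are two structurally different such families: the $slash_1$ branch triggered by conflict between $tx_{A}^{\text{erefund}}$ and $tx_{A}^{\text{claim}}$, and the $slash_2$ branch triggered by conflict between a claim and a refund after both locking transactions are confirmed. I would handle them in parallel by observing that in either case the complete preimage set needed to redeem the slash output lies in the mempool at $h$, and in either case the amount unlocked strictly exceeds the maximum fee any conflicting transaction could pay without the briber incurring a net loss. Finally, I would note that slashing produces a valid block that terminates the relevant subgame, so $M$ forgoes no strictly greater continuation payoff by acting now. Combining these observations yields $u_M(h,\text{slash}) > u_M(h,\neg\text{slash})$, which is the claimed inequality.
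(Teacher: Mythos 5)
Your proposal is correct and takes essentially the same approach as the paper's proof: both rest on the single economic observation that any bribe large enough to make non-slashing preferable would have to exceed the total value locked on the relevant chain, which a rational party will never publish precisely because that value is what the slash output captures. The paper phrases this compactly as a contradiction, while you unfold it as a direct comparison — explicitly quantifying the slash payoff ($P_a+p_b+x_a$ or $P_b+p_a+x_b$) against the fee-plus-bribe alternative and checking both the $slash_1$ and $slash_2$ families — but the key insight is identical.
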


\begin{proof}
The miner only gets the option to slash the locked tokens if there is a bribery scenario of censoring a current transaction by a conflicting transaction that becomes valid after a certain timelock. The miners will only censor the transaction if they get enough bribes from the bribing party to censor it. Assuming the lemma does not hold, at least one other action exists for the miner $M$, which gives it a better utility than slashing. Therefore, the bribe amount should be more than the tokens locked in the chains. Assuming the parties are rational, such a bribe will not be published. Thus, no action can provide a better utility than slashing, and the assumption that the lemma does not hold is false. Hence, rational miners will always choose to slash both parties' locked tokens.
\end{proof}

\subsection{Game Trees}
The game proceeds in rounds with parties putting all their available choices before a miner comes to confirm some transaction, and then the next round starts. The game is divided into three parts for better representation. The base tree is the starting point of the tree and starts with $A$ sharing the $tx_{lock}^A$ with $B$ or not. This tree gives information about different scenarios of early refunds and reflects that the best strategy is not to deviate from the honest path. The claim tree assumes both parties have locked their tokens in their chains. The claim tree starts with $B$, choosing whether to publish the claim transaction or not. The refund tree focuses on parties wanting to return their tokens without claiming. In any of the trees, wherever a miner gets the option to slash, it chooses that, and thus, all the other available options are not shown in the tree to reduce its size. The nodes in the tree represent the party that takes the action, and the action taken by that party is represented by the arrow that takes it to another node, which takes the next action. After all the transitions are finished, the leaf represents the utilities of the parties $A$, $B$ and $M$. In Figure~\ref {fig:gamesetup}, Figure~\ref {fig:gameclaim} and Figure~\ref {fig:gamerefund}, the utilities of all the parties in different game paths are represented at the leaf nodes. The utilities in $Chain\text{-}A$ and $Chain\text{-}B$ are represented with a superscript of $1$ and $2$, respectively, and $f$ represents the transaction fees the miner gets on confirming some transaction. The claim tree and refund tree are together called as redeem tree. The game trees do not directly represent the opportunity cost, but if two paths give the same utility, the path with the least delay is preferred because of the lesser loss of opportunity cost. Ideally, $B$ would prefer $P_a$ over $P_b$ and vice-versa for $A$ as they are interested in the exchange. In case of refund this interest of parties changes and thus the difference in griefing premium ($p_b - p_a$) helps in dis-incentivizing the parties from leaving.

\begin{figure}[h]
    \centering
    \includegraphics[width=\textwidth]{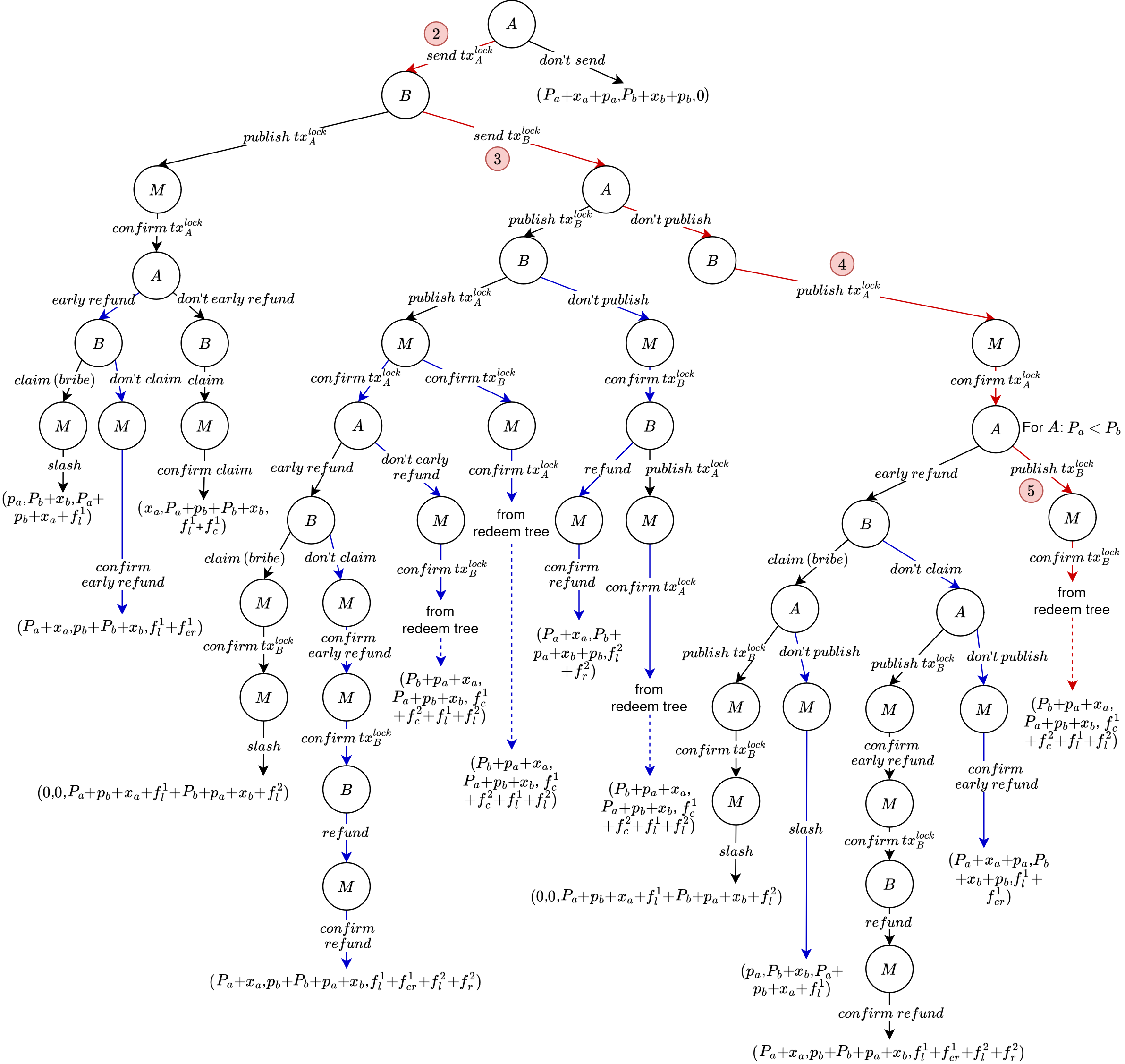}
    \caption{SPNE - Base Tree: Blue lines show sub-game optimal strategies; red lines show main game optimal strategies. Red path numbers correspond to transaction steps in Figure~\ref{4sv1_flow}}
    \label{fig:gamesetup}
\end{figure}

\subsubsection{Base Tree}
The base tree begins with $A$ deciding whether to send her locking transaction $tx_A^{lock}$ to $B$. If not, the protocol ends without affecting either party's tokens. If sent, $B$ can either publish $tx_A^{lock}$ or respond with $tx_B^{lock}$. If $B$ only publishes $tx_A^{lock}$, miners ($M$) confirm it and earn a fee $f_l^1$, and $A$ then chooses whether to publish an early refund. If $A$ does so and $B$ bribes miners to censor it, $M$ slashes all locked tokens. Otherwise, $A$ is refunded and $M$ earns $f_{er}^1$. If $A$ doesn't refund early, $B$ can claim the funds after a timelock, gaining $P_a$, while $A$ loses.

If $B$ sends $tx_B^{lock}$, $A$ then decides whether to publish it. If she does and $B$ withholds $tx_A^{lock}$, $M$ confirms $tx_B^{lock}$ and gains $f_l^2$. $B$ can now either refund, gaining both his funds and $A$'s griefing premium, or proceed with the exchange. By backward induction, refunding yields higher utility. If both publish their locking transactions, miners confirm one first, and parties act accordingly. In any bribery attempt, miners benefit from slashing, and both parties incur losses.

Transaction fees $f_{er}^1$, $f_r^1$ are for early and regular refunds on $Chain$-$A$, and $f_l^2$, $f_r^2$ are for locking and refunding on $Chain$-$B$. The redeem tree shows equilibrium utilities after locking, further explored in the claim tree. If parties exchange locking transactions and $A$ publishes first, backward induction shows $B$ prefers refunding to gain the griefing premium.

Finally, if both parties exchange transactions and $B$ publishes $tx_A^{lock}$ first (confirmed with fee $f_l^2$), $A$ can either early refund, publish $tx_B^{lock}$ or do nothing(not shown in Figure~\ref{fig:gamesetup}). Since early refunding results in fee loss, backward induction favours continuing the exchange by publishing $tx_B^{lock}$. The path where $A$ does not do anything will lead to $B$ claiming the locked assets as it knows the main secret.

\subsubsection{Claim Tree}
The claim tree represents the subgame where both parties have published and locked their tokens on the chain. It starts with $B$ having the option to publish a claim transaction or leave the exchange. Leaving the exchange means eventual refining by both parties, which we will discuss in the refund tree. If $B$ publishes a claim transaction revealing the secret $s_m$, $A$ can publish the claim transaction. In subsequent sub-games, whenever a party $A$ or $B$ gets an option to bribe by submitting the refund transaction on the chain, the best strategy for the miners is to slash; thus, all rational parties will refrain from doing so. Thus, the best strategy is to claim each other's tokens without refunding after publishing the claim transactions. $f_c^1$ represents the transaction fees of claim transaction on $Chain\text{-}A$ and $f_c^2$, $f_r^2$ represents claim and refund transaction fees on $Chain\text{-}B$.

\begin{figure}[htbp]
    \centering
    \includegraphics[width=\textwidth]{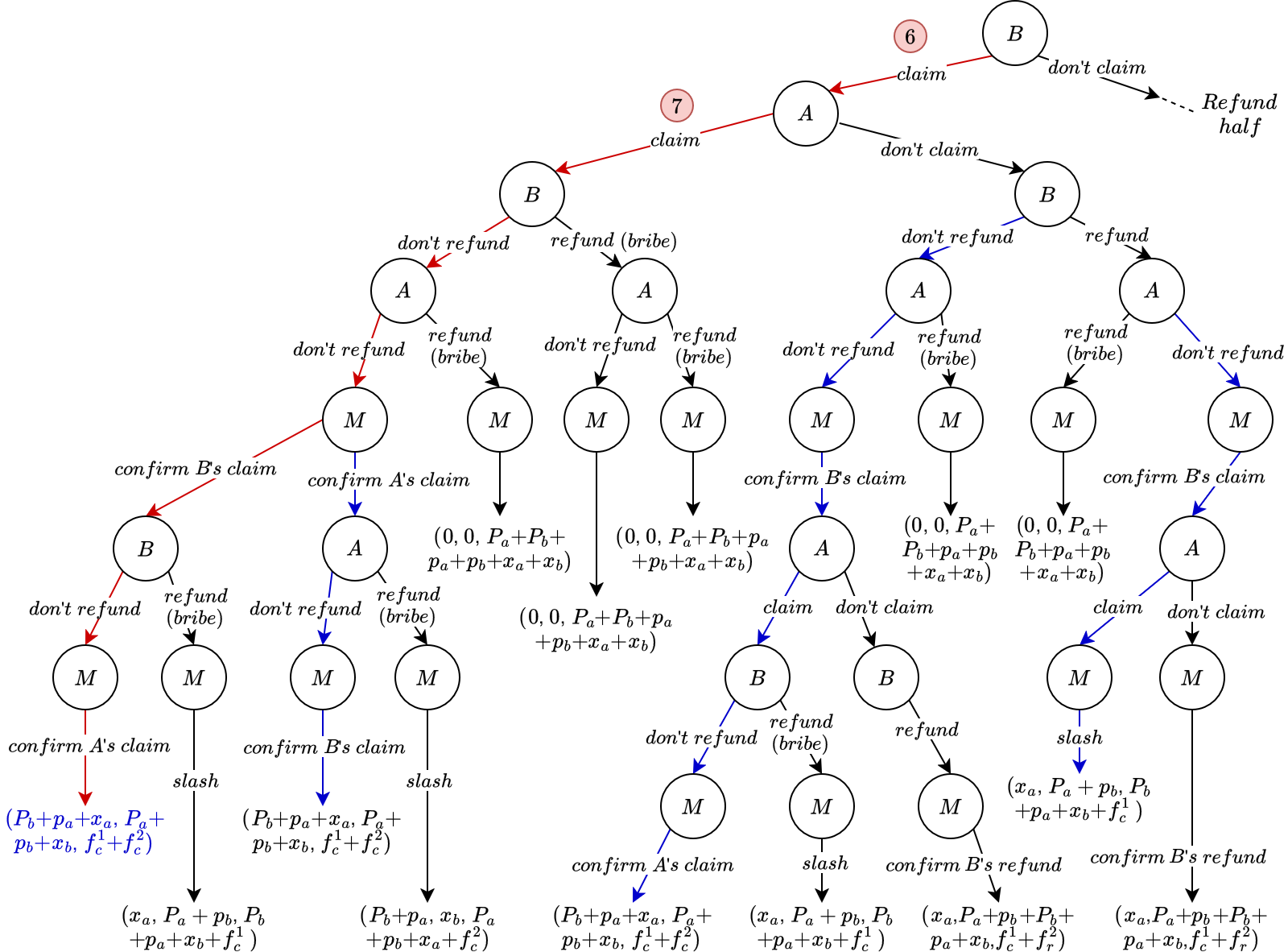}
    \caption{SPNE - Claim Half}
    \label{fig:gameclaim}
\end{figure}

\subsubsection{Refund Tree}
When $B$ does not publish a claim transaction, $A$ either publishes a refund transaction or does not. If she does not publish the refund transaction, $B$ will publish his refund transaction, which is eventually confirmed by miners by gaining $f_r^2$. After this, $B$ can still try to claim $A$'s transaction by publishing a claim, which can be contested by $A$ by publishing a refund, but the miners' best strategy would be to slash.
If $A$ publishes a refund after $B$ does not publish a claim transaction, the best strategy for $B$ is to also submit a refund transaction and avoid slashing actions for the miner. Thus, using backward induction, we can see from the refund tree that the best strategy is when both parties publish refund transactions without publishing their claim transactions. 

\begin{figure}[htbp]
    \centering
    \includegraphics[width=0.70\textwidth]{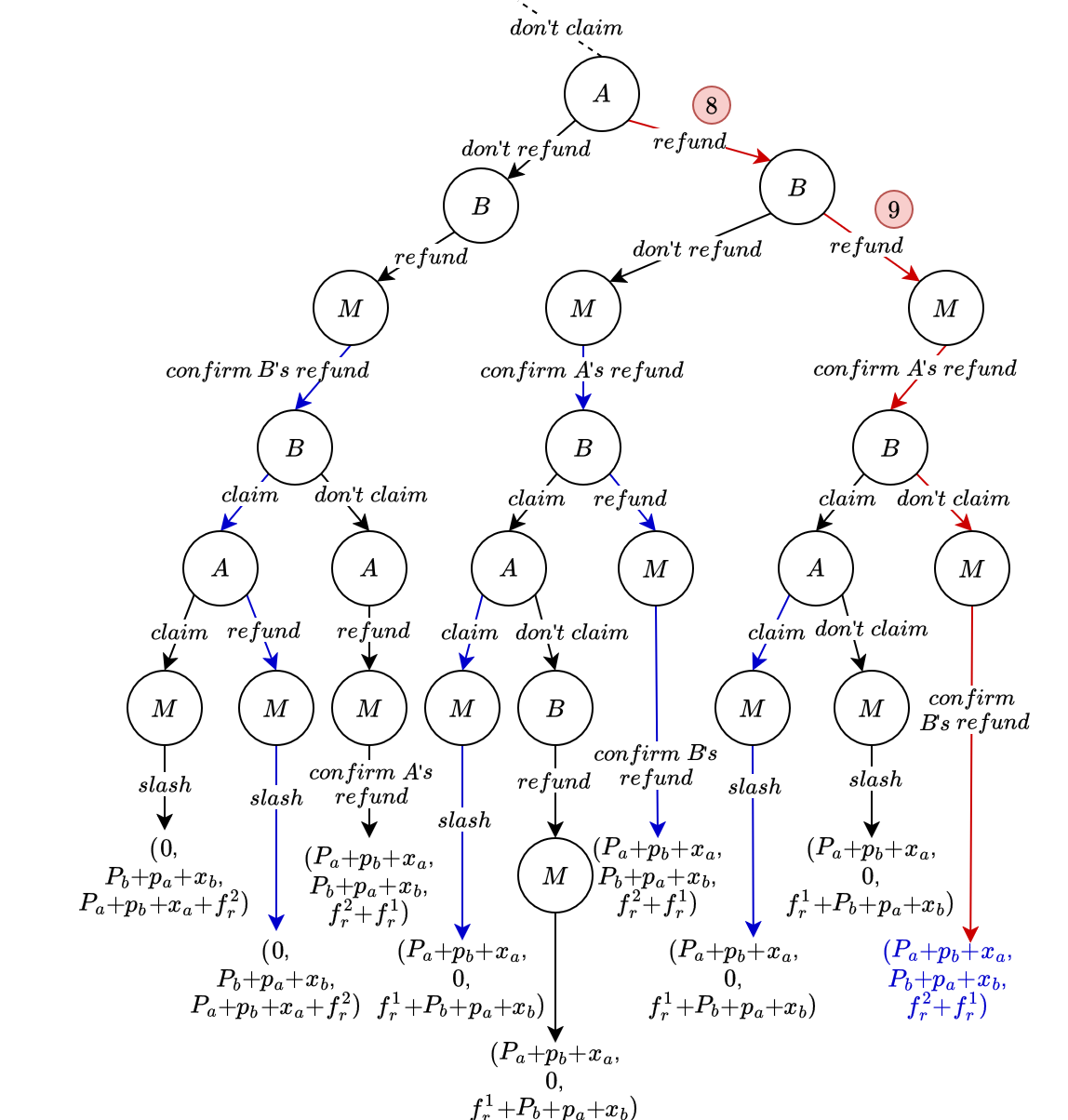}
    \caption{SPNE - Refund Half}
    \label{fig:gamerefund}
\end{figure}

To prove that our protocol is game-theoretically safe, we need to show that the protocol holds the Subgame Perfect Nash Equilibrium at the honest(expected) path in the game tree.

\begin{theorem}[SPNE]
    \label{equilibrium}
    The Subgame Perfect Nash Equilibrium for our game is the strategy profile $s^*(A, B, M) = $
    $($$send$ $tx_A^{lock}$, $don't$ $publish$, $publish$ $tx_B^{lock}$$)$, 
    $($$send$ $tx_B^{lock}$, $publish$ $tx_A^{lock}$, $claim$, $don't$ $refund$, $don't$ $refund$$)$, 
    $(($$confirm$ $tx_A^{lock}$, $confirm$ $tx_B^{lock}$, $confirm$ $B's$ $claim$, $confirm$ $A's$ $claim$$)$ $or$ $($
    $confirm$ $tx_A^{lock}$, $confirm$ $tx_B^{lock}$, $confirm$ $A's$ $claim$, $confirm$ $B's$ $claim$$))$.

\end{theorem}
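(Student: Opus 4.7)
The plan is to prove the SPNE by standard backward induction on the game tree obtained by concatenating the Base Tree (Figure~\ref{fig:gamesetup}), Claim Tree (Figure~\ref{fig:gameclaim}), and Refund Tree (Figure~\ref{fig:gamerefund}). Because the game has perfect information and finitely many nodes, backward induction always yields an SPNE. At every miner node I would invoke the Slashing Lemma (Lemma~\ref{slashinglemma}): whenever $M$ has a slash action available, it strictly dominates every alternative, so every subtree following a bribery attempt collapses to a slashing leaf where both parties lose their locked funds. This single observation prunes away all deviation branches in which either $A$ or $B$ tries to race a conflicting transaction against a counterparty transaction.

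First I would resolve the two redeem subtrees. In the Refund Tree, starting from the deepest nodes, if both locking transactions are on-chain and $B$ has declined to claim, then any claim-versus-refund contest triggers slashing by Lemma~\ref{slashinglemma}. Backward induction therefore leaves a unique non-slashed equilibrium in which $A$ publishes $tx_A^{refund}$ and $B$ publishes $tx_B^{refund}$, giving payoffs $(P_a+p_b-f_r^1,\,P_b+p_a-f_r^2)$. In the Claim Tree, the symmetric argument shows that once $B$ publishes $tx_A^{claim}$ the only non-slashed terminal for $A$ is to publish $tx_B^{claim}$; comparing this to the Refund Tree continuation, $B$ strictly prefers claiming (gaining $P_a$, which exceeds $p_a$ by the principal-versus-premium assumption) to leaving the exchange.

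Next I would propagate these continuation values up through the Base Tree, comparing utilities using the protocol's design inequalities: principals dominate premiums, premiums dominate transaction fees, and $p_b>p_a$ so the party who publishes first gains strictly more if the counterparty abandons. By Lemma~\ref{griefinglemma} every griefing branch is dominated for the abandoning party, which prunes the tree further. The remaining induction forces $A$'s first move to be \emph{send} $tx_A^{lock}$, $B$'s response to be \emph{send} $tx_B^{lock}$ and then \emph{publish} $tx_A^{lock}$, and $A$'s response to be \emph{publish} $tx_B^{lock}$ rather than trigger an early refund (which would cost fees without gain). Combining this with the Claim Tree equilibrium reproduces precisely the strategy profile $s^*$ in the statement; the two admissible orderings for $M$ reflect the fact that $tx_A^{lock}$ and $tx_B^{lock}$ live on different chains and are non-conflicting, so the miner is indifferent between the two confirmation orders and both give simultaneously optimal branches.

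The main obstacle I anticipate is the Base Tree case analysis: it contains several subtly different subgames depending on which locking transaction is already on-chain, which secret ($s_{r1}$, $s_e$, $s_m$) each party currently possesses, and which side is about to move. At each such node one must verify that the slashing payoff from Lemma~\ref{slashinglemma}, combined with the strict inequalities between $P_a,P_b$, $p_a,p_b$, and transaction fees, is enough to make the honest continuation strictly preferable for the deviating party. Special care is needed for the node where $B$ has published $tx_A^{lock}$ but withheld $tx_B^{lock}$: there I must argue that $A$'s best response is the early refund (using $s_{r1}$), which in turn makes $B$'s deviation strictly worse than returning $tx_B^{lock}$ in the first place, thereby closing the induction.
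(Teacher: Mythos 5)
Your proposal takes essentially the same route as the paper's own proof, which consists precisely of invoking Lemma~\ref{slashinglemma} to prune every bribery branch at the miner's nodes and then applying backward induction through the Base, Claim, and Refund trees (Figure~\ref{fig:gamesetup}, Figure~\ref{fig:gameclaim}, Figure~\ref{fig:gamerefund}) to recover the honest profile $s^*$ --- your write-up is in fact considerably more detailed than the paper's brief argument, and your anticipated ``special care'' node (where $B$ publishes $tx_A^{lock}$ while withholding $tx_B^{lock}$, forcing $A$'s early refund) is exactly a branch the paper's Base Tree resolves. One minor accounting correction: in the claim-versus-abandon comparison, $B$'s incentive to claim stems from preferring $P_a$ over $P_b$ together with the refund path costing him the premium difference $p_b - p_a$ (he gains $p_a$ but forfeits his larger $p_b$), rather than from $P_a$ exceeding $p_a$ as you state.
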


\begin{proof}
    Lemma ~\ref{slashinglemma} ensures parties will never try to bribe the miners because of slashing. We apply backward induction at every subgame of the setup, claim, and refund phases of the game tree until we get the SPNE for the entire game. Figure~\ref {fig:gamesetup}, Figure~\ref {fig:gameclaim} and Figure~\ref {fig:gamerefund} show the SPNE of the entire 4-Swap game. Thus, parties can never get more utility than the prescribed(honest) strategy.
 \end{proof}

 Theorem ~\ref{equilibrium} shows that the honest course of action is the SPNE of our game. We now show that the strongest miner $M$ can be generalized to a multi-miner setup in both chains while ensuring the protocol's safety.

 \begin{lemma}[Miner's Dilemma]
    \label{minerdilemma}
    In the case of a multi-miner setup, a miner does not wait for the parties to publish conflicting transactions for slashing opportunities.
\end{lemma}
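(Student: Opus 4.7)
The plan is to establish the lemma via an opportunity-cost argument combined with backward induction, leveraging the results already proven for the single-miner game. First, I would observe that a slashing opportunity for any miner can only materialize when a pair of conflicting transactions (one revealing the bribery-detecting pair of secrets such as $(s_m, s_{r1})$ or $(s_m, s_{br}, s_{r2})$) simultaneously sits in the mempool. By Theorem \ref{equilibrium}, the SPNE strategy profile of rational parties $A$ and $B$ never includes publishing such conflicting transactions; therefore, along any subgame reached in equilibrium, the expected slashing reward to a waiting miner is zero.

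Next, I would compare the two possible strategies available to a single miner $M_i$ in a multi-miner setting when a fee-paying transaction $tx$ appears in the mempool: either (i) attempt to include $tx$ in the next block immediately, or (ii) withhold $tx$ while hoping a conflicting bribery transaction arrives, enabling a slash. Let $f_{tx}$ denote the fee of $tx$ and let $V$ denote the total locked value obtainable by slashing. Under strategy (i), $M_i$'s expected payoff in the next round is $q \cdot f_{tx}$, where $q>0$ is $M_i$'s share of effective hashrate. Under strategy (ii), since no other rational miner also withholds (each faces the same incentives) and since rational parties do not generate conflicting transactions, with probability at least $1-q$ some other miner $M_j$ publishes $tx$ in the next block and $M_i$ receives nothing; with the remaining probability $q$, $M_i$ still only collects $f_{tx}$ in a later round, at a cost of one round of delay.

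I would then formalize the strict dominance: because the honest path of the protocol (the SPNE) produces no bribery-driven conflicting transactions, the expected slashing term is $0\cdot V = 0$ in both strategies, and the opportunity cost of withholding is the strictly positive loss of $(1-q) f_{tx}$ in expectation per round. Hence strategy (i) strictly dominates strategy (ii) for every miner $M_i$, so no rational miner waits. This also justifies, retroactively, the earlier simplifying assumption in the game-theoretic analysis that the single powerful miner $M$ does not delay confirmations absent a bribe.

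The main obstacle I anticipate is the corner case in which a single miner controls an overwhelming fraction of hashrate, so that $q \to 1$ and the opportunity cost shrinks to zero. I would handle this by explicitly restricting to the multi-miner regime where at least two miners each control a non-negligible share, which is the hypothesis of the lemma; and by noting that even in that limit, withholding only breaks even with immediate confirmation in expectation, so a tie-breaking preference for earlier fees (implicit in the opportunity-cost discussion already used in the game tree) still makes immediate confirmation weakly optimal and strictly optimal whenever $q<1$.
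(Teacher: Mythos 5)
Your proposal is correct and follows essentially the same route as the paper: invoke the slashing result to conclude that rational parties never publish conflicting transactions, so a waiting miner's expected slashing reward is zero, and withholding then strictly loses the transaction fee to competing miners via opportunity cost. Your version merely formalizes the paper's brief argument with an explicit hashrate-share calculation and handles the near-monopoly corner case, which the paper implicitly excludes by the multi-miner hypothesis.
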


\begin{proof}
Lemma \ref{slashinglemma} proves that slashing gives the maximum utility to the miners. However, in the presence of rational parties, parties know that slashing is the best action for the miners, and thus, they will refrain from publishing conflicting transactions. Thus, the miner will lose the opportunity cost to some other miner by waiting for the conflicting transaction and hence does not wait for the same.
\end{proof}

 \begin{theorem}
    In the 4-Swap protocol, rational parties \(A\) and \(B\) maximize their utilities by following honest actions, even in a multi-miner setup.
 \end{theorem}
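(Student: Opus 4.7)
The plan is to reduce the multi-miner analysis to the single-miner SPNE already established in Theorem~\ref{equilibrium}, by showing that from the strategic perspective of $A$ and $B$ the multi-miner setup induces the same incentive structure as the single-miner one. First, I would fix an arbitrary multi-miner configuration on both $Chain\text{-}A$ and $Chain\text{-}B$ and observe that the action sets available to $A$ and $B$ at any history $h$ depend only on which transactions have been confirmed and in what order, not on the identity of the confirming miner. Consequently, the game tree projected onto the choices of $A$ and $B$ retains the same structure, and their payoffs $u_A$, $u_B$ at each terminal history are unchanged.

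Next, I would argue that the slashing deterrent survives the generalization. By Lemma~\ref{slashinglemma}, any individual rational miner presented with a slashing opportunity strictly prefers to slash. Under the shared-mempool and synchronous-communication assumptions from the system model, once conflicting transactions carrying complementary secrets are broadcast every miner observes the opportunity. Hence, from the viewpoint of $A$ and $B$, the probability that some miner slashes is at least as high as in the single-miner case, so every bribery branch in the Base, Claim, and Refund trees remains strictly dominated.

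It still needs to be ruled out that miners strategically delay honest transactions in hope of later slashing rewards. This is precisely Lemma~\ref{minerdilemma}: withholding a non-conflicting transaction forfeits its fee to a competing miner, so rational miners confirm promptly. Thus the timing assumptions underlying the single-miner proof transfer verbatim. Combining these observations, I would apply backward induction on the three subtrees exactly as in the proof of Theorem~\ref{equilibrium} and conclude that the honest strategy profile $s^*(A, B)$ remains an SPNE for the two parties, so both maximize their utility by following the protocol.

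The step I expect to be most delicate is the collusion question: one must preclude coalitions of miners who jointly refuse to publish slashing transactions in order to share an unusually large bribe. My plan is to dispatch this by invoking the system-model assumption that the premiums strictly exceed any rational bribe: any single miner outside the coalition still has the unilateral option to include the slashing transaction, whose reward dominates any share of a rational bribe, so the coalition is not incentive-compatible. This restores the single-miner deterrent and closes the argument.
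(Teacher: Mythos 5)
Your proposal is correct and follows essentially the same route as the paper's proof: you reduce the multi-miner game to the single-miner SPNE of Theorem~\ref{equilibrium} by observing that $A$'s and $B$'s leaf utilities are unaffected by how miner rewards are distributed, that Lemma~\ref{slashinglemma} makes slashing dominant for any individual miner, and that Lemma~\ref{minerdilemma} rules out strategic delay via opportunity cost---exactly the paper's three ingredients. Your closing treatment of miner coalitions is a sound additional robustness check that the paper leaves implicit (partial coalitions are undercut by any outside miner's unilateral slashing option, and the grand coalition collapses back to the single-miner case already handled by the slashing lemma), but it refines rather than alters the argument.
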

 \begin{proof} 
    Replacing a single all-powerful miner with arbitrary miners in a multi-miner setup yields identical utilities for parties $A$ and $B$ at all game tree leaf nodes. Lemma~\ref{slashinglemma} shows that the slashing mechanism deters any miner, regardless of hashing power, from delaying transactions due to bribery. Rational miners thus enforce the slashing mechanism when applicable. In non-bribery scenarios, miners can increase their rewards by confirming the only valid transaction (on any particular chain). Thus, arbitrary miners make the same decisions as an all-powerful miner. While the rewards in the game tree may be distributed among multiple miners in a multi-miner setup, this distribution does not affect the utilities of \(A\) and \(B\). Consequently, the rational strategies for \(A\) and \(B\) remain aligned with honest behaviour, preserving the protocol's game-theoretic safety in a multi-miner environment.
 \end{proof}

\section{Discussion}
To make the 4-Swap protocol reverse bribery safe along with griefing safety, we can leverage the idea of splitting the refund transaction into two steps, such that whenever a refund transaction is published, there is sufficient time for miners to claim a portion and burn the remaining tokens~\cite{DBLP:conf/ndss/WadhwaSZN23, chung2022rapidash}. Splitting the transaction in two steps helps in providing a punishing window, thereby solving the reverse bribery. In 4-Swap there are three bribery cases: early refund of $A$ getting censored by claim of $B$, claim of $B$ getting censored by refund of $A$ and claim of $A$ getting censored by refund of $B$. To handle all these cases, all the transactions that cause conflicts have to be split up. Thus, the claim of $B$ and the two main refund transactions also have to be split up into two transactions. Thus for the worst case where both parties refund, we will have six transactions and in best case where parties exchange their tokens, we will have five transactions. This still leads to an open question whether we can extend this construction to be reverse bribery safe in just four transactions.

 \section{Conclusion}
\label{conclusions}
We propose 4-Swap, the first atomic swap protocol to achieve both grief-free and bribery-resistant atomic swaps using only four steps. To our knowledge, this is the first work that combines principals and premiums in both chains while matching the number of transactions in TN-Swap, leading to faster execution of the protocol compared to previous grief-free atomic swaps. A game-theoretical analysis of the protocol shows that the honest execution path constitutes a stable equilibrium, and thus satisfies the goals of being grief-free and safe against bribery attacks.

\bibliography{lipics-v2021-sample-article}

\appendix

\section{Security Analysis}
\label{securityanalysis}
We use the notion of UC security~\cite{DBLP:conf/tcc/CanettiDPW07} for the three phases of setup, initiation and redeeming phases of $\Pi_{4S}$ based on~\cite{DBLP:conf/sp/TsabaryYME21}. The UC is used to prove the security for the three phases, and then put forward the conditions that the parties require to redeem the locked tokens. We now define our formal protocol $\Pi_{4S}$ in Figure~\ref {formalprotocol4s}.

\begin{figure}
\centering
\fbox{ 
    \begin{minipage}{\columnwidth}
    Protocol $\Pi_{4S}$ defines an atomic swap between participants $A$ and $B$, initiated by $B$. Deposits are $P_a$ for $A$ and $P_b$ for $B$, with griefing premiums $p_a$ and $p_b$, and bribery premiums $x_a$ and $x_b$, respectively.
    \\
    \rule{\textwidth}{0.4pt}
    \begin{center} \textbf{Setup Phase} \end{center}
    $B$ draws $s_m \stackrel{\text{R}}{\leftarrow} \{0, 1\}^{\mu}$, $s_{r2} \stackrel{\text{R}}{\leftarrow} \{0, 1\}^{\mu}$  and sets $dig_{m} \leftarrow H(s_m)$, $dig_{r2} \leftarrow H(s_{r2})$. Then $B$ sends $dig_{m}$ along with $UTXO-p_b$ to $A$. $A$ then draws $s_{r1} \stackrel{\text{R}}{\leftarrow} \{0, 1\}^{\mu}$, $s_{e} \stackrel{\text{R}}{\leftarrow} \{0, 1\}^{\mu}$,  $s_{br} \stackrel{\text{R}}{\leftarrow} \{0, 1\}^{\mu}$ and sets $dig_{r1} \leftarrow H(s_{r1})$, $dig_{e} \leftarrow H(s_{e})$, $dig_{br} \leftarrow H(s_{br})$. $A$ then creates $tx_{A}^{lock}$ that initiates swap on $Chain\text{-}A$ with $dig_{m}$, $dig_{e}$, $dig_{r1}$, $dig_{r2}$ $t_1$, $t_2$, $P_a$, $x_a$ and $p_b$ as parameters. $A$ also creates multi-signature transactions $tx_{A}^{claim}$, $tx_{A}^{erefund}$, $tx_{A}^{refund}$ using $tx_{A}^{lock}$ and signs it. The $tx_{A}^{claim}$, $tx_{A}^{erefund}$ and $tx_{A}^{refund}$ are completely signed before signing $tx_{A}^{lock}$.
    $A$ then sends the $tx_{A}^{lock}$, $tx_{A}^{claim}$, $dig_{r1}$, $dig_{e}$, $dig_{br}$ and $UTXO-p_a$ to $B$. $tx_{A}^{lock}$ is not published yet. $B$ then creates $tx_{B}^{lock}$ that initiates swap on $Chain\text{-}B$ with $dig_{m}$, $dig_{r2}$, $dig_{r1}$, $dig_{br}$, $t_4$, $P_b$, $x_b$ and $p_a$ as parameters. $B$ also creates a multi-signature transaction $tx_{B}^{claim}$ and $tx_{B}^{refund}$  using $tx_{B}^{lock}$ and signs it. $B$ then sends the $tx_{B}^{lock}$ and $tx_{B}^{claim}$ to $A$. The $tx_{B}^{claim}$ and $tx_{B}^{refund}$ are completely signed before signing $tx_{B}^{lock}$. $tx_{B}^{lock}$ is not published yet.
    \\
    \rule{\textwidth}{0.4pt}
    \begin{center} \textbf{Initiation Phase} \end{center}
\( B \) publishes \( tx_{A}^{lock} \) to the mempool of \( Chain\text{-}A\), and it is subsequently included in a block. Following this, \( A \) publishes \( tx_{B}^{lock} \) to the mempool of \( Chain\text{-}B\), which is also eventually confirmed in a block.
    \\
    \rule{\textwidth}{0.4pt}
    \begin{center} \textbf{Redeeming Phase} \end{center}
    $A$ can send $s_e$ to $B$ after $tx_{B}^{lock}$ gets confirmed for early claim without waiting until $t_1$.
    If $A$ creates and publishes $tx_{A}^{erefund}$ for early refunding $P_a+x_a$ using $s_{r1}$, $B$ also gets back $p_b$. 
    If $tx_{A}^{erefund}$ is not published, $B$ waits for $t_1$ to expire (if he does not have $s_e$) and publishes $tx_{A}^{claim}$ to claim $P_a+p_b$ and $x_a$ is returned to $A$. If $B$ publishes $tx_{A}^{claim}$, then $A$ also and publishes $tx_{B}^{claim}$ to claim $P_b+p_a$ and $x_b$ is returned. If $B$ does not claim before $t_2$, $A$ publishes $tx_{A}^{refund}$ to get $P_a+x_a+p_b$. If $A$ does not claim before $t_4$, $B$ publishes $tx_{B}^{refund}$ to get $P_b+x_b+p_a$.
    \end{minipage}
}
\caption{Protocol $\Pi_{4S}$}
\label{formalprotocol4s}
\end{figure}

\label{ucframework}
We first relax our protocol by considering all conflicting and inactive transactions that become valid after timelocks as valid transactions. Thus, we remove the time parameter from the redeeming conditions and then define $rPredicate$ as a function of party, path, and preimages known. $\mathcal{Q}$ represents all the possible paths in $rPredicate()$.

\[
\begin{aligned}
& rPredicate(path, P, h_m, h_{br}, h_e, h_{r1}, h_{r2}) = \\
& \begin{cases}
    (P = A) \land h_m \land h_{br} & \text{if } path = claim_B\text{-}A, \\
    (P = B) \land h_m & \text{if } path = claim_A\text{-}B, \\
    (P = B) \land h_m \land h_e & \text{if } path = eclaim_A\text{-}B, \\
    (P = A) \land h_{r1} & \text{if } path = refund_A\text{-}A, \\
    (P = B) \land h_{r2} & \text{if } path = refund_B\text{-}B, \\
    h_m \land h_{r1} & \text{if } path \in 
    \begin{aligned}[t]
        & \{ slash_{1,A}\text{-}M, slash_{1,B}\text{-}M \},
    \end{aligned} \\
    h_m \land h_{br} \land h_{r2} & \text{if } path \in 
    \begin{aligned}[t]
        & \{ slash_{2,A}\text{-}M, slash_{2,B}\text{-}M\}
    \end{aligned}
\end{cases}
\end{aligned}
\]

We develop an ideal functionality known as $\mathcal{G}_{mbp}$, which simulates the blockchain and mempool projection of two contracts across different blockchains. Also, we model hash function$(\mathcal{H})$ as the random oracle~\cite{canetti2014practical}. We simplify the model by assuming there are authenticated channels between the parties and the functionalities, eliminating the need to address digital signatures. Then, we define our relaxed formal protocol $\Pi_{r4S}$ under the hybrid world with $\mathcal{G}_{mbp}$ and $\mathcal{H}$. The difference between the $\Pi_{4S}$ and $\Pi_{r4S}$ is that the relaxed version ignores inactive, contradicting transactions and considers the miner part of the system. Now, we put forward the lemma describing all valid redeeming paths, the proof of which requires ideal functionality $\mathcal{F}_{r4S}$ and a simulator $Sim$.

\begin{lemma}
    \label{mainlemma}
    If the setup and initiation are completed according to $\Pi_{r4S}$, the variables $pub_m$, $pub_{r1}$, and $pub_{r2}$ indicate whether the preimages are stored in $\mathcal{G}_{mbp}$, while the variable $shared$ shows whether party $A$ has shared the early claim secret with party $B$. Initially, all indicators are set to 0. Under these conditions, the parties $A$, $B$, and miners $M_1$, $M_2$ can execute the following redeeming transactions:

    \begin{itemize}
        \item Party $B$ can initiate a valid redeeming transaction using $eclaim_A\text{-}B$ if either $shared=1$ or if using $claim_A\text{-}B$. By doing so, it updates $pub_m$ to 1. Additionally, $B$ has the option to initiate a valid redeeming transaction called $refund_B\text{-}B$, which will set $pub_{r2}$ to 1.
        \item Party $A$ can publish a valid redeeming transaction with either $erefund_A\text{-}A$ or $refund_A\text{-}A$, which will activate $pub_{r1}$. If $pub_m$ is set to 1, $A$ can also publish $claim_B\text{-}A$.
        \item If both $pub_m$ and $pub_{r1}$ are true, miners $M_1$ and $M_2$ can proceed with valid redeeming transactions in $slash_1$.
        \item If $pub_m$, $pub_{br}$, and $pub_{r2}$ are all true, miners $M_1$ and $M_2$ can perform valid redeeming transactions in $slash_2$.
    \end{itemize}
\end{lemma}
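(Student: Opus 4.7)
The plan is to prove the lemma by a UC-style argument: construct an ideal functionality $\mathcal{F}_{r4S}$ that maintains the state $(pub_m, pub_{br}, pub_{r1}, pub_{r2}, shared)$ and a simulator $Sim$ that mediates between $\mathcal{F}_{r4S}$ and the environment while interacting with $\mathcal{G}_{mbp}$ and the random oracle $\mathcal{H}$. Once UC-indistinguishability between the hybrid execution of $\Pi_{r4S}$ and the ideal execution is established, the claim reduces to a case analysis over the branches of $rPredicate$: for each redeeming path listed in the lemma, I verify that (i) the designated party actually holds the preimages required by the corresponding branch of $rPredicate$, and (ii) publishing the redeeming transaction updates the indicator variables as claimed; and conversely, that no party can trigger any branch before the preimages it needs have been revealed on-chain or in the mempool.

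First I would fix the post-setup, post-initiation state: $B$ privately holds $s_m, s_{r2}$; $A$ privately holds $s_{r1}, s_e, s_{br}$; $tx_A^{lock}$ and $tx_B^{lock}$ are confirmed on their respective chains carrying the hash commitments as spending predicates; and all indicator variables are $0$. Because $\mathcal{H}$ is modelled as a random oracle, any party that attempts to produce a preimage it did not generate and that has not appeared in $\mathcal{G}_{mbp}$ succeeds only with negligible probability; this is the standard hybrid step that lets the simulator answer consistently.

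Next I would walk through each branch of $rPredicate$. For $claim_A\text{-}B$, the predicate requires $(P=B)\land h_m$, so $B$ can always publish and in doing so reveals $s_m$, justifying $pub_m\leftarrow 1$. For $eclaim_A\text{-}B$, the predicate additionally requires $h_e$, which $B$ possesses precisely when $shared=1$, matching the stated condition. For $erefund_A\text{-}A$ and $refund_A\text{-}A$, the predicate requires $h_{r1}$, which is $A$'s secret, and publication reveals $s_{r1}$, so $pub_{r1}\leftarrow 1$; similarly $refund_B\text{-}B$ reveals $s_{r2}$, setting $pub_{r2}\leftarrow 1$. For $claim_B\text{-}A$, the predicate requires $(P=A)\land h_m\land h_{br}$; since $h_{br}$ is $A$'s own secret but $s_m$ is not, $A$ can redeem only once $pub_m=1$, i.e., after $B$ has revealed $s_m$ via some prior $claim$ or $eclaim$ path. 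For each slashing path in $slash_1$, both $s_m$ and $s_{r1}$ must coexist in the mempool, which is exactly the condition $pub_m \land pub_{r1}$; for $slash_2$ the same argument applies to $s_m, s_{br}, s_{r2}$ and yields $pub_m \land pub_{br}\land pub_{r2}$. In each case the UTXO structure from Figure~\ref{4s_utxo} and Figure~\ref{4s_utxoB} confirms that the corresponding transaction is syntactically spendable.

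The main obstacle is the converse direction: arguing that no party can execute any redeeming path outside the listed set, and in particular cannot trigger a branch whose indicator variables are not yet set. This relies on two ingredients working in tandem. The random-oracle analysis ensures that preimages cannot be forged, so any redeeming attempt whose $rPredicate$ branch mentions an unrevealed preimage fails except with negligible probability. The mempool projection $\mathcal{G}_{mbp}$ guarantees that a preimage becomes available to all parties exactly when a transaction carrying it is broadcast, so the indicator bookkeeping in $\mathcal{F}_{r4S}$ faithfully tracks what is extractable in the real world. Combining both with the UTXO-consumption rules of $\mathcal{G}_{mbp}$ (each locking output can be spent at most once) gives a one-to-one correspondence between real-world redemptions and ideal-world indicator updates, completing the proof via the standard hybrid argument.
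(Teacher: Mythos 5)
Your proposal follows essentially the same route as the paper: the paper likewise reduces the lemma to the existence of the ideal functionality $\mathcal{F}_{r4S}$ (whose redeem-handling code encodes exactly the indicator updates in the lemma's bullets via $rPredicate$) and then proves that $\Pi_{r4S}$ UC-realizes $\mathcal{F}_{r4S}$ by constructing a simulator $Sim$ that internally emulates $\mathcal{H}$ and $\mathcal{G}_{mbp}$, with state and output indistinguishability argued case by case over the protocol messages. Your explicit branch-by-branch walk of $rPredicate$ and the converse unforgeability argument are simply the unpacked form of what the paper embeds in the definition of $\mathcal{F}_{r4S}$ and the random-oracle modelling, so the two proofs coincide in substance.
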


Lemma~\ref{mainlemma} is valid if there exists an ideal functionality, denoted as $\mathcal{F}_{r4S}$, which represents a relaxed 4-Swap ideal functionality in the ideal world. This functionality ensures that for any probabilistic polynomial-time (PPT) adversary $Adv$, there exists a PPT simulator $Sim$ such that no PPT environment $\mathcal{Z}$ can computationally distinguish between the execution of the protocol $\Pi_{r4S}$ with $Adv$ in the hybrid world and the execution of $\mathcal{F}_{r4S}$ in the ideal world with the $Sim$.

The ideal functionality \( \mathcal{F}_{r4S} \) manages the setup, initiation, and publication of redemption transactions for a session ID (\(sid\)). It interacts with parties \(A\), \(B\), \(M_1\), \(M_2\), and the simulator \(Sim\), maintaining several status indicators: \(setup_{B}^{r4S}\), \(setup_{A}^{r4S}\), \(setup^{r4S}\), \(published_{B}^{r4S}\), \(published_{A}^{r4S}\), \(init_{B}^{r4S}\), \(init_{A}^{r4S}\), \(shared^{r4S}\), \(pub_{m}^{r4S}\), \(pub_{r1}^{r4S}\), \(pub_{r2}^{r4S}\), \(pub_{e}^{r4S}\), and \(pub_{br}^{r4S}\), all initially set to 0. Upon receiving setup or publication requests from the parties, it updates these indicators and leaks relevant information to \(Sim\). When redeem requests are made, \( \mathcal{F}_{r4S} \) evaluates the requests based on the current state of these indicators using the predicate function \(rPredicate\) and returns the result to the requesting party, ensuring proper protocol execution and interaction. Fig~\ref{fig:idealfunctionality4S1} shows the complete ideal functionality $\mathcal{F}_{r4S}$.

\begin{figure}
\centering
\fbox{ 
    \begin{minipage}{\columnwidth}
    In the ideal world, the functionality \( \mathcal{F}_{r4S} \) is responsible for managing the setup, initiation, and publication of the redemption transaction for a specific session ID (\(sid\)). It interacts with parties \(A\), \(B\), \(M_1\), $M_2$, and the simulator \(Sim\), overseeing the proper execution of the protocol. Internally, it tracks the process through several status indicators, including \(setup_{B}^{r4S}\), \(setup_{A}^{r4S}\), \(setup^{r4S}\), \(published_{B}^{r4S}\), \(published_{A}^{r4S}\), \(init_{B}^{r4S}\), \(init_{A}^{r4S}\), \(shared^{r4S}\), \(pub_{m}^{r4S}\), \(pub_{r1}^{r4S}\), \(pub_{r2}^{r4S}\), \(pub_{e}^{r4S}\), and \(pub_{br}^{r4S}\), all of which are initially set to 0.
    \begin{itemize}
        \item Upon receiving input \((setup\text{-}B, sid)\) from $B$, if \(setup_{B}^{r4S} = 0\), set \(setup_{B}^{r4S} \leftarrow 1\) and leak \((setup\text{-}B, sid)\) to \(Sim\).
        \item Upon receiving input \((setup\text{-}A, sid)\) from $A$, if $setup_{A}^{r4S} = 0$, set $setup_{A}^{r4S} \leftarrow 1$ and leak \((setup\text{-}A, sid)\) to $Sim$.
        \item Upon receiving input \((setup_{2}\text{-}B, sid)\) from $B$ if $setup_{A}^{r4S} = 1$, set $setup^{r4S} \leftarrow 1$ and leak \((setup_{2}\text{-}B, sid)\) to $Sim$.
        \item Upon receiving input $(publish\text{-}B, sid)$ from $B$, if $setup_{B}^{r4S} = 1 \wedge published_{B}^{r4S} = 0$, set $published_{B}^{r4S} \leftarrow 1$ and leak $(publish\text{-}B, sid)$ to $Sim$.
        \item Upon receiving input $(publish\text{-}A, sid)$ from $B$, if $setup^{r4S} = 1 \wedge published_{A}^{r4S} = 0$, set $published_{A}^{r4S} \leftarrow 1$ and leak $(publish\text{-}A, sid)$ to $Sim$.
        \item Upon receiving input $(init\text{-}B, sid)$ from $M_1$, if $published_{B}^{r4S} = 1 \wedge init_{B}^{r4S} = 0$, set $init_{B}^{r4S} \leftarrow 1$ and leak $(init\text{-}B, sid)$ to $Sim$.
        \item Upon receiving input $(init\text{-}A, sid)$ from $M_2$, if $published_{A}^{r4S} = 1 \wedge init_{A}^{r4S} = 0$, set $init_{A}^{r4S} \leftarrow 1$ and leak $(init\text{-}A, sid)$ to $Sim$.
        \item Upon receiving input $(share, sid)$ from $A$, if $init_{A}^{r4S} = 1 \wedge shared^{r4S} = 0$, set $shared^{r4S} \leftarrow 1$ and leak $(share, sid)$ to $Sim$.
        \item Upon receiving input $(redeem, sid, path)$ from party $P$ where $path \in paths$:
            \\When $init_B^{r4S} = 1$:
            \begin{itemize}
                \item Update $pub_m^{r4S} \leftarrow pub_m^{r4S} \vee ((P=B) \wedge shared^{r4S} \wedge (path = eclaim_A\text{-}B)) \vee ((P=B) \wedge (path = claim_A\text{-}B))$.
                \item Update $pub_e^{r4S} \leftarrow pub_e^{r4S} \vee ((P=B) \wedge shared^{r4S} \wedge (path = eclaim_A\text{-}B))$.
                \item Update $pub_{r1}^{r4S} \leftarrow pub_{r1}^{r4S} \vee ((P=A) \wedge path \in \{erefund_A\text{-}A, refund_A\text{-}A\})$.
            \end{itemize}
        When $init_B^{r4S} = 1$ and $init_A^{r4S} = 1$:
        \begin{itemize}
            \item Update $pub_{br}^{r4S} \leftarrow pub_{br}^{r4S} \vee ((P=A) \wedge (path = claim_B\text{-}A))$.
            \item Update $pub_{r2}^{r4S} \leftarrow pub_{r2}^{r4S} \vee ((P=B) \wedge path = refund_B\text{-}B)$.
        \end{itemize}
        Set $res^{r4S}$ $\leftarrow$ $rPredicate$$($$path$, $P$, $pub_{m}^{r4S}$, $pub_{r1}^{r4S}$, $pub_{r2}^{r4S}$, $pub_{e}^{r4S}$, $pub_{br}^{r4S}$$)$, leak $(redeem$, $sid$, $path)$ to $Sim$ and send $res^{r4S}$ to $P$.
        \item Upon receiving $(update, sid, pre)$ from $Sim$ where $pre \in \{m, r1, r2, e, br\}$ via the influence port, set $pub_{pre}^{r4S} \leftarrow 1$.
    \end{itemize}
    \end{minipage}
}
\caption{Ideal Functionality $\mathcal{F}_{r4S}$}
\label{fig:idealfunctionality4S1}
\end{figure}

\subsection{Security Proof}
This section outlines the proof of Lemma~\ref{mainlemma} and is organized as follows. First, we establish the security in the case of static corruption. We then explain how protocols and ideal functionalities follow a phase-based approach, where they expect specific messages in each phase. Honest parties and ideal functionalities disregard messages that are unexpected or received out of sequence. The section begins with a description of the global ideal functionalities \(\mathcal{H}\) and \(\mathcal{G}_{mbp}\). We then formalize the protocol \(\Pi_{r4S}\), define \(\mathcal{F}_{r4S}\), and prove that it satisfies the security guarantees described in Lemma 1. Lastly, we show that \(\Pi_{r4S}\) UC-realizes \(\mathcal{F}_{r4S}\) by providing construction of simulator.

\subsubsection{Functionalities}  

    \paragraph{Global Random Oracle}
    In the security analysis, we model \(\mathcal{H}\) as a global random oracle, described as an ideal functionality \(\mathcal{H}\)~\cite{canetti2014practical}. This functionality accepts queries of different lengths and returns outputs of fixed length \(\mu\). The response is chosen uniformly at random for each new query, and repeated queries for the same input always return the same result. Figure~\ref{fig:globalrandomoracle} outlines the details of \(\mathcal{H}\).
    \begin{figure}
    \centering
    \fbox{
    \begin{minipage}{\columnwidth}
    The functionality \( \mathcal{H} \) is a global random oracle that takes inputs \( x \in \{0, 1\}^* \), producing outputs \( y \in \{0, 1\}^\mu \), and maintains an initially empty table \( T \) to record query-response mappings.
    \begin{itemize}
        \item Upon receiving \( (sid, x) \) from a party or functionality:
        \begin{itemize}
            \item If \( x \) has been queried before and \( (x, y) \in T \), return the stored response \( y \).
            \item If \( x \) is a new query, sample \( y \in \{0,1\}^\mu \) uniformly at random, store \( (x, y) \) in \( T \), and return \( y \).
        \end{itemize}
    \end{itemize}
    \end{minipage}
    }
    \caption{Global Random Oracle $\mathcal{H}$}
    \label{fig:globalrandomoracle}
    \end{figure}

    \paragraph{Mempool and Blockchain Projection $\mathcal{G}_{mbp}$}

    The functionality \(\mathcal{G}_{mbp}\) models the interaction between two parties, \(A\) and \(B\), and their respective blockchains, focusing on simulating mempool projections during cross-chain activities. It manages the creation, publication, and redemption of locking transactions while providing adversarial visibility into the system's state transitions.

    During the setup phase, \(B\) initiates the process by sending a setup message containing transaction hashes (\(dig_m\) and \(dig_{r2}\)) to \(\mathcal{G}_{mbp}\). These hashes are recorded, leaked to the adversary (\(Adv\)), and forwarded to \(A\). Subsequently, \(A\) completes its setup by sending additional transaction hashes (\(dig_e\), \(dig_{r1}\), and \(dig_{br}\)), which the functionality uses to create a locking transaction (\(tx_A^{lock}\)) and a corresponding claim transaction (\(tx_A^{claim}\)). These transactions are sent to \(B\) and leaked to \(Adv\). In a second setup phase, \(B\) gets its locking and claim transactions (\(tx_B^{lock}\) and \(tx_B^{claim}\)) based on the recorded digests, the functionality also sends them to \(A\), and triggers another leak to \(Adv\).
    
    The publication and initialization stages allow \(A\) and \(B\) to publish their transactions, making them visible to all parties. Upon receiving a publication request, \(\mathcal{G}_{mbp}\) updates the publication state and leaks the event to \(Adv\). Miners (\(M_1\) and \(M_2\)) can then confirm these transactions in the respective blockchains, further propagating the information to all parties while also informing \(Adv\) about the event.
    
    In the redemption phase, any party, including \(A\), \(B\), or miners, can attempt to redeem a locking transaction by providing secrets and a valid path. \(\mathcal{G}_{mbp}\) verifies the secrets against their corresponding digests using a hash function \(\mathcal{H}\). It updates the publication states based on the validity of the secrets, computes the redemption result using a predefined predicate (\(rPredicate\)), and leaks all relevant details to \(Adv\). The computed result is then sent back to the requesting party.

\begin{figure*}
\centering
\fbox{ 
    \begin{minipage}{\textwidth}
    The ideal functionality \( \mathcal{G}_{mbp} \) in the hybrid model simulates the mempool projection of two contracts across different blockchains for a session ID \( sid \). It interacts with parties \( A, B \), miners \( M_1 \) and \( M_2 \) in the respective blockchains, the functionality \( \mathcal{H} \), and the adversary \( Adv \).
    It has the following variables $dig_m^{mbp} \leftarrow \perp$ and $dig_{r2}^{mbp} \leftarrow \perp$, $dig_e^{mbp} \leftarrow \perp$, $dig_{r1}^{mbp} \leftarrow \perp$, $dig_{br}^{mbp} \leftarrow \perp$, $tx_A^{lock} \leftarrow \perp$, $tx_B^{lock} \leftarrow \perp$, $publish_B^{mbp} \leftarrow 0$, $publish_A^{mbp} \leftarrow 0$, $init_B^{mbp} \leftarrow 0$, $init_A^{mbp} \leftarrow 0$, $pub_m^{mbp} \leftarrow 0$,$pub_e^{mbp} \leftarrow 0$,$pub_{r1}^{mbp} \leftarrow 0$,$pub_{br}^{mbp} \leftarrow 0$,$pub_{r2}^{mbp} \leftarrow 0$ and $res^{mbp} \leftarrow 0$.
    \begin{itemize}
        \item Upon receiving $(setup\text{-}B, sid, dig_m, dig_{r2})$ from $B$ when $dig_m^{mbp} = \perp$ and $dig_{r2}^{mbp} = \perp$, set $dig_m^{mbp} \leftarrow dig_m$, $dig_{r2}^{mbp} \leftarrow dig_{r2}$. Leak $(setup\text{-}B, sid, dig_m, dig_{r2})$ to $Adv$ and send $(setup\text{-}B, sid, dig_m, dig_{r2})$ to $A$.
        \item Upon receiving $(setup\text{-}A, sid, dig_{e}, dig_{r1}, dig_{br})$ from $A$ when $dig_e^{mbp} = \perp$, $dig_{r1}^{mbp} = \perp$ and $dig_{br}^{mbp} = \perp$, set $dig_e^{mbp} \leftarrow dig_e$, $dig_{r1}^{mbp} \leftarrow dig_{r1}$, $dig_{br}^{mbp} \leftarrow dig_{br}$. Create $tx_A^{lock}$ with parameters $dig_m^{mbp}, dig_e^{mbp}, dig_{r1}^{mbp}, dig_{r2}^{mbp}, dig_{br}^{mbp}$ and store it. Then create $tx_A^{claim}$ using $tx_A^{lock}$ and send both transactions to $B$. Leak $(setup\text{-}A, sid, dig_{e}, dig_{r1}, dig_{br})$ to $Adv$.
        \item Upon receiving $(setup_2\text{-}B, sid)$ from $B$ when $dig_e^{mbp} \neq \perp$, $dig_{r1}^{mbp} \neq \perp$ and $dig_{br}^{mbp} \neq \perp$, create $tx_B^{lock}$ with parameters $dig_m^{mbp}, dig_{r1}^{mbp}, dig_{r2}^{mbp}, dig_{br}^{mbp}$ and store it. Then create $tx_B^{claim}$ using $tx_B^{lock}$ and send both to $A$. Leak $(setup_2\text{-}B, sid)$ to $Adv$.
        \item Upon receiving \( (publish\text{-}B, sid, tx_A^{lock}) \) from \( B \), if \( publish_B^{mbp} = 0 \), set \( publish_B^{mbp} \leftarrow 1 \), leak \( (publish\text{-}B, sid) \) to $Adv$, and send \( (publish\text{-}B, sid, tx_A^{lock}) \) to all parties.
        \item Upon receiving $(init\text{-}B, sid, tx_A^{lock})$ from $M_1$ when $publish_B^{mbp}=1$, set $init_B^{mbp} \leftarrow 1$, leak $(init\text{-}B, sid, tx_A^{lock})$ to $Adv$ and send $(init\text{-}B, sid, tx_A^{lock, mbp})$ to all parties.
        \item Upon receiving $(publish\text{-}A, sid, tx_B^{lock})$ from $B$. If $publish_A^{mbp}=0$, set $publish_A^{mbp} \leftarrow 1$, leak $(publish\text{-}A, sid)$ to $Adv$ and send $(publish\text{-}A, sid, tx_B^{lock})$ to all parties.
        \item Upon receiving $(init\text{-}A, sid, tx_B^{lock})$ from $M_2$ when $publish_A^{mbp}=1$, set $init_A^{mbp} \leftarrow 1$, leak $(init\text{-}A, sid, tx_B^{lock, mbp})$ to $Adv$ and send $(init\text{-}A, sid, tx_B^{lock})$ to all parties. 
        \item Upon receiving $(redeem$, $sid$, $tx_A^{lock}$, $s_m$, $s_e$, $s_{r1}$, $s_{br}$, $s_{r2}$, $path)$ or $(redeem$, $sid$, $tx_B^{lock}$, $s_m$, $s_{r1}$, $s_{br}$, $s_{r2}$, $path)$ from any party $P$ $\in$ $\{$$A$, $B$, $M_1$, $M_2$$\}$ such that $path \in paths$:
        \begin{itemize}
            \item $pub_m^{mbp} \leftarrow pub_m^{mbp} \vee ((s_m \neq \perp) \wedge (\mathcal{H}(sid, s_m) = dig_m^{mbp}))$
            \item $pub_e^{mbp} \leftarrow pub_e^{mbp} \vee ((s_e \neq \perp) \wedge (\mathcal{H}(sid, s_e) = dig_e^{mbp}))$
            \item $pub_{r1}^{mbp} \leftarrow pub_{r1}^{mbp} \vee ((s_{r1} \neq \perp) \wedge (\mathcal{H}(sid, s_{r1}) = dig_{r1}^{mbp}))$
            \item $pub_{br}^{mbp} \leftarrow pub_{br}^{mbp} \vee ((s_{br} \neq \perp) \wedge (\mathcal{H}(sid, s_{br}) = dig_{br}^{mbp}))$
            \item $pub_{r2}^{mbp} \leftarrow pub_{r2}^{mbp} \vee ((s_{r2} \neq \perp) \wedge (\mathcal{H}(sid, s_{r2}) = dig_{r2}^{mbp}))$.
            \item $res^{mbp}$ $\leftarrow$ $rPredicate$$($$path$, $P$, $s_m$, $s_e$, $s_{r1}$, $s_{br}$, $s_{r2})$, leak $(redeem$, $sid$, $s_m$, $s_e$, $s_{r1}$, $s_{br}$, $s_{r2}$, $path$,$P$, $pub_m^{mbp}$, $pub_e^{mbp}$, $pub_{r1}^{mbp}$, $pub_{br}^{mbp}$, $pub_{r2}^{mbp}$, $res^{mbp})$ to $Adv$ and send $res^{mbp}$ to $P$.
        \end{itemize} 
    \end{itemize}
    \end{minipage}
}
\caption{Mempool and blockchain projection functionality $\mathcal{G}_{mbp}$}
\label{fig:gmbp}
\end{figure*}

    \paragraph{Relaxed 4-Swap $\Pi_{r4S}$}

    The protocol \(\Pi_{r4S}\) describes the 4-Swap protocol between two participants, \(A\) and \(B\), with the involvement of two miners, \(M_1\) and \(M_2\) under relaxed conditions as discussed in Section \ref{ucframework}. The process is divided into three phases: setup, initiation, and redemption. The protocol ensures secure interaction between participants using ideal functionalities \(\mathcal{H}\) for hashing and \(\mathcal{G}_{mbp}\) for managing mempool projections and cross-chain transactions. Each participant and miner maintains a local state, which transitions through predefined stages: setup, initiated, and redeeming.

    In the setup phase, \(A\) and \(B\) initiate the protocol by generating and sharing cryptographic digests of random secrets via \(\mathcal{H}\). These digests are sent to \(\mathcal{G}_{mbp}\), which uses them to create locking and claim transactions. \(B\) begins the process by creating digests (\(dig_m, dig_{r2}\)) from its secrets and sending them to \(\mathcal{G}_{mbp}\). In response, \(A\) stores these digests and generates its own (\(dig_e, dig_{r1}, dig_{br}\)), which are similarly processed by \(\mathcal{G}_{mbp}\). Both parties store the locking and claim transactions provided by \(\mathcal{G}_{mbp}\), transitioning their states to initiated.
    
    In the initiation phase, participants and miners prepare to initiate the swap. \(B\) publishes its locking transaction (\(tx_A^{lock}\)) through \(\mathcal{G}_{mbp}\), prompting miners (\(M_1\) and \(M_2\)) to acknowledge and store these transactions. Once \(M_1\) confirms receipt of \(tx_A^{lock}\), the state of $B$ is set to redeeming. Similarly, \(A\) waits for \(B\)'s locking transaction to be published and confirmed before publishing its own (\(tx_B^{lock}\)). \(M_2\) takes analogous steps, ensuring both locking transactions are available for redemption.
    
    In the redemption phase, $A$ may share secret $s_e$ with $B$. Depending on the chosen redemption path, participants and miners submit secrets to \(\mathcal{G}_{mbp}\) for transaction redemption. These secrets are validated against their corresponding digests, and the outcomes are broadcast. Both locking transactions are redeemed, completing the atomic swap.

\begin{figure}
\centering
\fbox{ 
    \begin{minipage}{\columnwidth}
    Protocol $\Pi_{r4S}$ defines an atomic swap between participants $A$, $B$, $M_1$, and $M_2$ for session $sid$, structured in three phases with two miners (possibly instances of the same one). Participants interact with ideal functionalities \(\mathcal{H}\) and \(\mathcal{G}_{mbp}\), and each have a local state $state^{r4S}$. Initially, $A$ and $B$ start with $state_A^{r4S} = state_B^{r4S} =$ setup, while $M_1$ and $M_2$ begin with $state_{M_1}^{r4S} = state_{M_2}^{r4S} =$ initiated.
    
    \rule{\textwidth}{0.4pt}
    \begin{center} $\boldsymbol{state_{P}^{r4S}}$ \textbf{= setup where } $\boldsymbol{P \in \{A, B\}}$ \end{center}
    $\boldsymbol{B}$: 
    \begin{itemize}
        \item Upon receiving $(setup\text{-}B, sid)$ from $\mathcal{Z}$, sample $s_m \stackrel{\text{R}}{\leftarrow} \{0, 1\}^{\mu}$ and $s_{r2} \stackrel{\text{R}}{\leftarrow} \{0, 1\}^{\mu}$. Compute $dig_m \leftarrow \mathcal{H}(sid, s_m)$ and $dig_{r2} \leftarrow \mathcal{H}(sid, s_{r2})$, then transmit $(setup\text{-}B, sid, dig_m, dig_{r2})$ to $\mathcal{G}_{mbp}$. Finally, await $tx_A^{claim}$ and $tx_A^{claim}$ from $\mathcal{G}_{mbp}$. Then store $tx_A^{lock}$ and $tx_A^{claim}$ and set $state_B^{r4S} \leftarrow$ initiated.
        \item Upon receiving $(setup\text{-}A, sid, dig_e, dig_{r1}, dig_{br})$ from $\mathcal{G}_{mbp}$, store $dig_e, dig_{r1}, dig_{br}$.
    \end{itemize}

    $\boldsymbol{A}$: 
    \begin{itemize}
        \item Upon receiving $(setup\text{-}B, sid, dig_m, dig_{r2})$ from $\mathcal{G}_{mbp}$, store $dig_m, dig_{r2}$.
        \item Upon receiving $(setup\text{-}A, sid)$ from $\mathcal{Z}$, sample $s_e \stackrel{\text{R}}{\leftarrow} \{0, 1\}^{\mu}$, $s_{r1} \stackrel{\text{R}}{\leftarrow} \{0, 1\}^{\mu}$ and $s_{br} \stackrel{\text{R}}{\leftarrow} \{0, 1\}^{\mu}$. Compute $dig_{e} \leftarrow \mathcal{H}(sid, s_e)$, $dig_{r1} \leftarrow \mathcal{H}(sid, s_{r1})$ and $dig_{br} \leftarrow \mathcal{H}(sid, s_{br})$, then transmit $(setup\text{-}B, sid, dig_e, dig_{r1}, dig_{br})$ to $\mathcal{G}_{mbp}$. Finally, await $tx_B^{claim}$ and $tx_B^{claim}$ from $\mathcal{G}_{mbp}$. Then store $tx_B^{lock}$ and $tx_B^{claim}$ and set $state_A^{r4S} \leftarrow$ initiated.
        \item Upon receiving $(init\text{-}B, sid, tx_A^{lock})$ from $\mathcal{G}_{mbp}$, set $state_A^{r4S} \leftarrow$ redeeming.
    \end{itemize}

    \rule{\textwidth}{0.4pt}
    \begin{center} $\boldsymbol{state_{P}^{r4S}}$ \textbf{= initiated where } $\boldsymbol{P \in \{A, B, M_1, M_2\}}$ \end{center}
    $\boldsymbol{B}$: 
    \begin{itemize}
        \item Upon receiving $(setup_2\text{-}B, sid)$ from $\mathcal{Z}$, after receiving $(setup\text{-}A, sid, dig_e, dig_{r1}, dig_{br})$, send $(setup_2\text{-}B, sid)$ to $\mathcal{G}_{mbp}$.
        \item Upon receiving $(publish\text{-}B, sid)$ from $\mathcal{Z}$, send $(publish\text{-}B, sid, tx_A^{lock})$ to $\mathcal{G}_{mbp}$.
        \item Upon receiving $(init\text{-}B, sid, tx_A^{lock})$ from $\mathcal{G}_{mbp}$, $state_B^{r4S} \leftarrow$ redeeming.
    \end{itemize}
    $\boldsymbol{A}$: 
    \begin{itemize}
        \item Upon receiving $(init\text{-}B, sid, tx_A^{lock})$ from $\mathcal{G}_{mbp}$, set $published_B^{r4S} \leftarrow 1$.
        \item Upon receiving $(publish\text{-}A, sid)$ from $\mathcal{Z}$ and $published_B^{r4S} = 1$, send $(publish\text{-}A, sid, tx_B^{lock})$ to $\mathcal{G}_{mbp}$.
        \item Upon receiving $(init\text{-}A, sid, tx_B^{lock})$ from $\mathcal{G}_{mbp}$, $state_A^{r4S} \leftarrow$ redeeming.
    \end{itemize}
    \end{minipage}
}
\caption{Protocol $\Pi_{r4S}$}
\label{fig:relaxedformalprotocol4S}
\end{figure}

\begin{figure}
\centering
\fbox{ 
    \begin{minipage}{\columnwidth}
    $\boldsymbol{M_1}$: 
    \begin{itemize}
        \item Upon receiving $(publish\text{-}B, sid, tx_A^{lock})$ from $\mathcal{G}_{mbp}$, store $tx_A^{lock}$ and set $received_B^{r4S} \leftarrow 1$.
        \item Upon receiving $(init\text{-}B, sid)$ from $\mathcal{Z}$ and $received_B^{r4S} = 1$, send $(init\text{-}B, sid, tx_A^{lock})$ to $\mathcal{G}_{mbp}$ and set $state_{M_1}^{r4S} \leftarrow$ redeeming.
    \end{itemize}
    $\boldsymbol{M_2}$: 
    \begin{itemize}
        \item Upon receiving $(publish\text{-}A, sid, tx_B^{lock})$ from $\mathcal{G}_{mbp}$, store $tx_B^{lock}$ and set $received_A^{r4S} \leftarrow 1$.
        \item Upon receiving $(init\text{-}A, sid)$ from $\mathcal{Z}$ and $received_A^{r4S} = 1$, send $(init\text{-}A, sid, tx_B^{lock})$ to $\mathcal{G}_{mbp}$ and set $state_{M_2}^{r4S} \leftarrow$ redeeming.
    \end{itemize}
    
    \rule{\textwidth}{0.4pt} 
    \begin{center} $\boldsymbol{state_{P}^{r4S}}$ \textbf{= redeeming where } $\boldsymbol{P \in \{A, B, M_1, M_2\}}$ \end{center}
    $\boldsymbol{A}$: \begin{itemize}
        \item Upon receiving $(share, sid)$ from $\mathcal{Z}$, send $(share, sid, s)$ to $B$.
    \end{itemize}
    $\boldsymbol{B}$: 
    \begin{itemize}
        \item Upon receiving $(share, sid, s)$ from $A$, verify $dig_{e} = \mathcal{H}(sid, s)$, set $s_e \leftarrow s$.
    \end{itemize}
    $\boldsymbol{P \in \{A, B, M_1, M_2\}}$: 
    \begin{itemize}
        \item Upon receiving $(redeem, sid, path)$ from $\mathcal{Z}$ such that $path \in \mathcal{Q}$:
        \begin{itemize}
            \item If \( path \in \{claim_B\text{-}A, claim_A\text{-}B, eclaim_A\text{-}B, slash_1^{r4S}\), \(slash_2^{r4s}\} \), with \( slash_1^{r4S} \in slash_1 \) and \( slash_2^{r4S} \in slash_2 \), and \( s_m \) is already available, then \( p_m^{r4S} \leftarrow s_m \); otherwise, it is set to \( \perp \).
            \item If $path \in  \{eclaim_A\text{-}B\}$, and $s_e$ is already available, then $p_e^{r4S} \leftarrow s_e$; otherwise, it is set to \( \perp \).
            \item If \( path \in \{refund_A\text{-}A, erefund_A\text{-}A, slash_1^{r4S}\} \), with \( slash_1^{r4S} \in slash_1 \), and \( s_{r1} \) is already available, then \( p_{r1}^{r4S} \leftarrow s_{r1} \); otherwise, it is set to \( \perp \).
            \item If $path \in  \{claim_B\text{-}A, slash_2^{r4s}\}$, with \( slash_2^{r4S} \in slash_2 \), and $s_{br}$ is already available, then $p_{br}^{r4S} \leftarrow s_{br}$; otherwise, it is set to \( \perp \).
            \item If $path \in  \{refund_B\text{-}B, slash_2^{r4s}\}$, with \( slash_2^{r4S} \in slash_2 \), and $s_{r2}$ is already available, then $p_{r2}^{r4S} \leftarrow s_{r2}$; otherwise, it is set to \( \perp \).
            \item Send \( (redeem, sid, tx_A^{lock}, p_m^{r4S}, p_e^{r4S}, p_{r1}^{r4S}, p_{br}^{r4S}, p_{r2}^{r4S}\), \(path) \) to \( \mathcal{G}_{mbp} \) and display the result.
            \item Send $(redeem, sid, tx_B^{lock}, p_m^{r4S}, p_{r1}^{r4S}, p_{br}^{r4S}, p_{r2}^{r4S}, path)$ to $\mathcal{G}_{mbp}$, return result.
        \end{itemize}
    \end{itemize}
    \end{minipage}
}
\caption{Protocol $\Pi_{r4S}$ cont.}
\label{fig:relaxedformalprotocol4S1}
\end{figure}
    
    \paragraph{Indistinguishability Proof}
    \begin{lemma}
        Let \( \mathcal{Z} \) be any probabilistic polynomial-time (PPT) environment and \( Adv \) be any PPT adversary that corrupts any subset of the parties \( \{A, B, M_1, M_2\} \). Then there exists a PPT simulator \( Sim \) such that the execution of the protocol \( \Pi_{r4S} \) in the hybrid world with \( Adv \) is computationally indistinguishable from the execution of the ideal functionality \( \mathcal{F}_{r4S} \) in the ideal world with \( Sim \).
    \end{lemma}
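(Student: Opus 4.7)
The plan is to construct a PPT simulator $Sim$ that internally runs the adversary $Adv$ while interfacing with $\mathcal{F}_{r4S}$ on one side and the environment $\mathcal{Z}$ on the other. For each static corruption pattern $C \subseteq \{A, B, M_1, M_2\}$, $Sim$ will simulate the views of $\mathcal{H}$ and $\mathcal{G}_{mbp}$ and the on-wire messages of honest parties, while translating the adversary's actions into $\mathcal{F}_{r4S}$ calls through its influence port. Because $\mathcal{F}_{r4S}$ leaks $(setup\text{-}B, sid)$, $(setup\text{-}A, sid)$, $(setup_2\text{-}B, sid)$, publication and initialization events, and redemption queries, the simulator has a tight synchronization hook for every phase transition that a real execution would expose.

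First I would describe the setup and initiation simulation. Whenever $\mathcal{F}_{r4S}$ signals $(setup\text{-}B, sid)$ for honest $B$, $Sim$ samples fresh uniform digests $dig_m, dig_{r2} \in \{0,1\}^\mu$ and writes them into the simulated $\mathcal{G}_{mbp}$ state without knowing any preimages; analogous steps handle $(setup\text{-}A, sid)$ for honest $A$ by sampling $dig_e, dig_{r1}, dig_{br}$. For corrupted parties, $Sim$ observes the digests $Adv$ chooses when it invokes the simulated $\mathcal{G}_{mbp}$, and for any random-oracle call $\mathcal{H}(sid, x)$ it answers consistently with its internal table $T$, recording the preimage $x$ so that it can later be recognized when $Adv$ submits it in a redemption. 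Publication and confirmation leaks are relayed verbatim to the simulated $\mathcal{G}_{mbp}$ view, which suffices because $\Pi_{r4S}$ reveals no additional content during these phases.

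The central part is the redemption simulation. When a corrupted party $P$ submits a $(redeem, \ldots, path)$ call to the simulated $\mathcal{G}_{mbp}$ with some secrets, $Sim$ verifies each supplied secret against the table $T$. For every preimage $pre \in \{m, r1, r2, e, br\}$ whose validity is newly witnessed in this call, $Sim$ sends $(update, sid, pre)$ to $\mathcal{F}_{r4S}$; it then issues $(redeem, sid, path)$ on behalf of $P$, receives $res^{r4S}$, and returns exactly that value from the simulated $\mathcal{G}_{mbp}$. For honest parties' redemption activity, $Sim$ uses the $(redeem, sid, path)$ leak from $\mathcal{F}_{r4S}$ to synthesize the corresponding simulated $\mathcal{G}_{mbp}$ message and, when honest $A$ releases $s_e$, delivers the stored honest preimage to $B$; the gating conditions on $init_B^{r4S}$ and $init_A^{r4S}$ in $\mathcal{F}_{r4S}$ mirror those on $init_B^{mbp}$ and $init_A^{mbp}$ in $\mathcal{G}_{mbp}$, so the two evaluations of $rPredicate$ agree.

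The main obstacle is bounding the statistical gap that arises because $Sim$ must commit to $dig$-values for honest parties before any corresponding preimage exists. If $Adv$ ever queries $\mathcal{H}(sid, x)$ for an $x$ that coincides with an honest preimage, or if a preimage sampled by $Sim$ for a digest accidentally collides with an existing table entry, the simulation becomes inconsistent. I would handle this by a standard hybrid argument over the at-most-five honest preimages and the adversary's $q = \mathrm{poly}(\mu)$ oracle queries: since each honest preimage is drawn uniformly from $\{0,1\}^\mu$, the probability of any such collision is at most $5q \cdot 2^{-\mu}$, which is negligible. Conditioned on no collision, the joint distributions of $\mathcal{Z}$'s view in the two worlds are identical, because every honest-party message, every hash reply, and every $res$ value is either sampled from the same distribution or deterministically fixed by inputs that both worlds treat the same way. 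This yields the required computational indistinguishability, and, combined with Lemma~\ref{mainlemma}'s characterization of redeem paths via $rPredicate$, establishes that $\Pi_{r4S}$ UC-realizes $\mathcal{F}_{r4S}$.
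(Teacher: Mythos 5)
Your overall architecture matches the paper's proof: a simulator that internally simulates $\mathcal{H}$ and $\mathcal{G}_{mbp}$, uses the leakage from $\mathcal{F}_{r4S}$ to mimic honest parties' messages toward the corrupted ones, relays corrupted parties' $\mathcal{G}_{mbp}$ inputs into $\mathcal{F}_{r4S}$ on their behalf, pushes newly witnessed valid preimages through the influence port via $(update, sid, pre)$, and concludes output indistinguishability from the fact that both worlds evaluate the same $rPredicate$ on identically maintained indicator variables. Where you genuinely diverge is the setup simulation: you have $Sim$ commit to uniform digests $dig_m, dig_{r2}, dig_e, dig_{r1}, dig_{br}$ for honest parties \emph{without knowing any preimages}, and you then absorb the resulting risk into a $5q \cdot 2^{-\mu}$ bad-event bound. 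This is where your proof has a gap. In the hybrid world the adversary does eventually see honest preimages in the clear: $\mathcal{G}_{mbp}$'s redeem interface leaks the submitted secrets $(s_m, s_e, s_{r1}, s_{br}, s_{r2})$ to $Adv$, and the $(share, sid, s)$ message delivers $s_e$ directly to a possibly corrupted $B$, who verifies $dig_e = \mathcal{H}(sid, s)$. Under your digest-first design, $Sim$ has no such preimages to hand over; you even write that $Sim$ "delivers the stored honest preimage," contradicting your own setup. The only way to rescue this route is lazy sampling plus \emph{programming} the random oracle so that the freshly sampled $s_e$ hashes to the previously fixed $dig_e$ — a step you never state, and one that is not available as described, because the paper models $\mathcal{H}$ as a \emph{global} random oracle: a corrupted $B$ (or $\mathcal{Z}$) querying $\mathcal{H}(sid, s_e)$ would receive a fresh uniform value unequal to $dig_e$ with overwhelming probability, and the inconsistency is detectable. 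Your collision bound addresses accidental digest collisions, not this consistency requirement, so the claim that the two views are identical conditioned on no collision does not follow.

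The paper's simulator avoids the problem entirely and more simply: for honest parties, $Sim$ itself samples the preimages $s_m^{sim}, s_{r2}^{sim}$ (resp.\ $s_e^{sim}, s_{r1}^{sim}, s_{br}^{sim}$) uniformly and computes the digests by honest queries to the simulated $\mathcal{H}$. Since $\mathcal{F}_{r4S}$ abstracts honest secrets as boolean indicators and never reveals anything that must be consistent with externally hidden values, these simulator-chosen secrets are distributed identically to the real ones, and every later message — including the $s_e$ share and honest redeem calls with actual preimages — can be produced exactly as in the hybrid world, with no bad-event analysis or oracle programming needed. I recommend you adopt that preimage-sampling design; your influence-port handling of corrupted redeems and the $rPredicate$-based output-equality argument can then stand as written, and together with Lemma~\ref{mainlemma} they complete the claim that $\Pi_{r4S}$ UC-realizes $\mathcal{F}_{r4S}$.
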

    To validate our protocol, we construct a simulator, $Sim$, that effectively bridges the gap between the hybrid and ideal worlds. In the hybrid world, both honest and corrupted parties interact not only with each other but also with the ideal functionalities \( H \) and \( \mathcal{G}_{mbp} \). Conversely, in the ideal world, honest parties communicate solely with the functionality \( \mathcal{F}_{r4S} \).

$Sim$ has the critical role of replicating the hybrid world's interactions within the ideal world to ensure that the corrupted parties cannot distinguish between the two settings. This involves two main tasks:

\begin{itemize}
    \item Simulating Honest Parties for Corrupted Parties: $Sim$ observes the actions of the honest parties through the leakage provided by \( \mathcal{F}_{r4S} \). Using this information, it sends messages to the corrupted parties that mimic those the honest parties would send if they were executing the protocol \( \Pi_{\text{r4S}} \). This ensures that the corrupted parties receive consistent and indistinguishable communications from what they expect in the hybrid world.
    \item Handling Corrupted Parties' Messages to Honest Parties: When corrupted parties send messages intended for the ideal functionalities or the honest parties, $Sim$ intercepts these messages. It then simulates their impact by providing appropriate inputs to \( \mathcal{F}_{r4S} \) on behalf of the corrupted parties or by utilizing the influence port of \( \mathcal{F}_{r4S} \). This accurately reflects the corrupted parties' actions within the ideal world.
\end{itemize}

By performing these tasks, $Sim$ ensures that the interactions experienced by the corrupted parties in the ideal world are indistinguishable from those in the hybrid world. This demonstrates that our protocol securely maintains the desired properties even in the presence of corrupted parties.

\begin{proof}
To ensure that executions in the ideal world are indistinguishable from those in the hybrid world, we design a simulator \( \text{Sim} \) capable of handling any number of corrupted parties. \( Sim\) internally simulates necessary components like \( H \) and \( \mathcal{G}_{mbp} \) and reacts to events in a way that mirrors the hybrid world.

In the hybrid world, the environment $\mathcal{Z}$ learns about the existence of messages exchanged between parties through the adversary \( Adv \), but not their contents. Honest parties send messages based on inputs from $\mathcal{Z}$ according to the protocol \( \Pi_{r4S} \), while corrupted parties are controlled by \( Adv \). Messages sent to corrupted parties are intercepted by \( Adv \), who may reveal them to $\mathcal{Z}$.

In contrast, honest parties in the ideal world do not communicate with each other; they only send their inputs to the ideal functionality \( F_{r4S} \). To replicate the communication patterns observed in the hybrid world, \( Sim \) utilises the inputs that honest parties send to \( F_{r4S} \) through leakage. This enables \( Sim \) to mimic the same communication as in the hybrid world.

The remaining task is demonstrating that the messages generated by \( \text{Sim} \) are indistinguishable from those in the hybrid world.
    \paragraph{State indistinguisability}
    The parties, irrespective of honest or corrupted, interact with the functionalities $\mathcal{H}$ and $\mathcal{G}_{mbp}$, but in the ideal world, only the corrupted parties interact with the internally simulated functionalities. In contrast, the honest parties only interact with the $\mathcal{F}_{r4S}$. We only discuss the different cases of simulator construction in the ideal world to make it indistinguishable from the hybrid world.
    \begin{itemize}
        \item $B\text{-}setup$:In the hybrid world $B$ sends $(setup\text{-}B, sid, dig_m, dig_{r2})$ to $\mathcal{G}_{mbp}$ irrespective of being honest or corrupted. In the ideal world:
            \begin{itemize}
                \item When $B$ is honest, $\mathcal{F}_{r4S}$ leaks $(setup\text{-}B, sid)$ to $Sim$ which internally simulates $B$ and samples $s_m^{sim} \stackrel{\text{R}}{\leftarrow} \{0, 1\}^{\mu}$, $s_{r2}^{sim} \stackrel{\text{R}}{\leftarrow} \{0, 1\}^{\mu}$ and sets $dig_{m}^{sim} \leftarrow \mathcal{H}(sid, s_m^{sim})$, $dig_{r2}^{sim} \leftarrow \mathcal{H}(sid, s_{r2}^{sim})$ through internally simulated $\mathcal{H}$. Simulate sending $(setup\text{-}B, sid, dig_m^{sim}, dig_{r2}^{sim})$ to $\mathcal{G}_{mbp}$ and leaking to $Adv$.
                \item When $B$ is corrupted, $Sim$ sets $dig_m^{sim} \leftarrow dig_m$, $dig_{r2}^{sim}$ $\leftarrow$ $dig_{r2}$ after receiving $(setup\text{-}B, sid, dig_m, dig_{r2})$ sent to $\mathcal{G}_{mbp}$ and send $(setup\text{-}B, sid)$ as honest $B$ to $\mathcal{F}_{r4S}$.
                \item For both cases, set $setup_B^{r4S} \leftarrow 1$.
            \end{itemize}
        \item $A\text{-}setup$: In the hybrid world $A$ sends $(setup\text{-}A$, $sid$, $dig_e$, $dig_{r1}$, $dig_{br})$ to $\mathcal{G}_{mbp}$ irrespective of being honest or corrupted. In the ideal world: 
            \begin{itemize}
                \item When $A$ is honest, $\mathcal{F}_{r4S}$ leaks $(setup\text{-}A, sid)$ to $Sim$ which internally simulates $A$ and samples $s_e^{sim}$ $\stackrel{\text{R}}{\leftarrow} \{0, 1\}^{\mu}$, $s_{r1}^{sim}$ $\stackrel{\text{R}}{\leftarrow} \{0, 1\}^{\mu}$, $s_{br}^{sim}$ $\stackrel{\text{R}}{\leftarrow} \{0, 1\}^{\mu}$ and sets $dig_{e}^{sim}$ $\leftarrow$ $\mathcal{H}$$(sid$, $s_e^{sim})$, $dig_{r1}^{sim}$ $\leftarrow$ $\mathcal{H}$$(sid$, $s_{r1}^{sim})$, $dig_{br}^{sim}$ $\leftarrow$ $\mathcal{H}$$(sid$, $s_{br}^{sim})$ through internally simulated $\mathcal{H}$. Simulate sending $(setup\text{-}A$, $sid$, $dig_e^{sim}$, $dig_{r1}^{sim}$, $dig_{br}^{sim})$ to $\mathcal{G}_{mbp}$ and leaking to $Adv$.
                \item When $A$ is corrupted, $Sim$ sets $dig_e^{sim}$ $\leftarrow$ $dig_e$, $dig_{r1}^{sim}$ $\leftarrow$ $dig_{r1}$, $dig_{br}^{sim}$ $\leftarrow$ $dig_{br}$ after receiving $(setup\text{-}A$, $sid$, $dig_e$, $dig_{r1}$, $dig_{br})$ sent to $\mathcal{G}_{mbp}$ and send $(setup\text{-}A, sid)$ as honest $A$ to $\mathcal{F}_{r4S}$.
                \item $tx_{A}^{lock, sim}$, $tx_{A}^{claim, sim}$ is set by internally simulating $\mathcal{G}_{mbp}$ on $dig_m^{sim}$, $dig_{r2}^{sim}$, $dig_e^{sim}$, $dig_{r1}^{sim}$, $dig_{br}^{sim}$.
                \item For both cases, set $setup_A^{r4S} \leftarrow 1$.
            \end{itemize}
        \item $B\text{-}setup_2$: In the hybrid world $B$ sends $(setup_2\text{-}B, sid)$ to $\mathcal{G}_{mbp}$ irrespective of being honest or corrupted. In the ideal world: 
            \begin{itemize}
                \item When $B$ is honest, $\mathcal{F}_{r4S}$ leaks $(setup_2\text{-}B, sid)$ to $Sim$ which internally simulates $B$ sending $(setup_2\text{-}B, sid)$ to simulated $\mathcal{G}_{mbp}$.
                \item When $B$ is corrupted and $Sim$ receives $(setup_2\text{-}B, sid)$ sent to $\mathcal{G}_{mbp}$, applies the message on internally simulated $\mathcal{G}_{mbp}$  and passes $(setup_2\text{-}B, sid)$ to  $\mathcal{F}_{r4S}$ as honest $B$.
                \item $tx_{B}^{lock, sim}$, $tx_{B}^{claim, sim}$ is set by internally simulating $\mathcal{G}_{mbp}$ on $dig_m^{sim}$, $dig_{r2}^{sim}$, $dig_{r1}^{sim}$, $dig_{br}^{sim}$.
                \item For both cases, set $setup^{r4S} \leftarrow 1$.
            \end{itemize}
        \item $B\text{-}publish$: In the hybrid world $B$ sends $(publish\text{-}B, sid, tx_A^{lock})$ to $\mathcal{G}_{mbp}$ irrespective of being honest or corrupted. In the ideal world: 
            \begin{itemize}
                \item When $B$ is honest, $\mathcal{F}_{r4S}$ leaks $(publish\text{-}B, sid)$ to $Sim$ which internally simulates $B$ sending $(publish\text{-}B, sid)$ to simulated $\mathcal{G}_{mbp}$.
                \item When $B$ is corrupted, $Sim$ internally simulates $\mathcal{G}_{mbp}$, verifies $tx_A^{lock} = tx_A^{lock, sim}$ and then sends $(publish\text{-}B, sid)$ as honest $B$ to $\mathcal{F}_{r4S}$.
                \item For both, set $published_B^{r4S} \leftarrow 1$ and $publish_B^{mbp} \leftarrow 1$.
            \end{itemize}
        \item $A\text{-}publish$: In the hybrid world $A$ sends $(publish\text{-}A, sid, tx_B^{lock})$ to $\mathcal{G}_{mbp}$ irrespective of being honest or corrupted. In the ideal world: 
            \begin{itemize}
                \item When $A$ is honest, $\mathcal{F}_{r4S}$ leaks $(publish\text{-}A, sid)$ to $Sim$ which internally simulates $A$ sending $(publish\text{-}A, sid)$ to simulated $\mathcal{G}_{mbp}$.
                \item When $A$ is corrupted, $Sim$ internally simulates $\mathcal{G}_{mbp}$, verifies $tx_B^{lock} = tx_B^{lock, sim}$ and then sends $(publish\text{-}A, sid)$ as honest $A$ to $\mathcal{F}_{r4S}$.
                \item For both, set $published_A^{r4S} \leftarrow 1$ and $publish_A^{mbp} \leftarrow 1$.
            \end{itemize}
        \item $M_1\text{-}init$: In the hybrid world $M_1$ sends $(init\text{-}B, sid, tx_A^{lock})$ to $\mathcal{G}_{mbp}$ irrespective of being honest or corrupted. In the ideal world: 
            \begin{itemize}
                \item When $M_1$ is honest, $\mathcal{F}_{r4S}$ leaks $(init\text{-}B, sid)$ to $Sim$ which internally simulates $M_1$ sending $(init\text{-}B, sid, tx_A^{lock})$ to simulated $\mathcal{G}_{mbp}$.
                \item When $M_1$ is corrupted, $Sim$ internally simulates $\mathcal{G}_{mbp}$, verifies $tx_A^{lock} = tx_A^{lock, sim}$ and then sends $(init\text{-}B, sid)$ as honest $M_1$ to $\mathcal{F}_{r4S}$.
                \item For both cases, set $init_B^{r4S} \leftarrow 1$ and $init_B^{mbp} \leftarrow 1$.
            \end{itemize}
        \item $M_2\text{-}init$: In the hybrid world $M_2$ sends $(init\text{-}A, sid, tx_B^{lock})$ to $\mathcal{G}_{mbp}$ irrespective of being honest or corrupted. In the ideal world: 
            \begin{itemize}
                \item When $M_2$ is honest, $\mathcal{F}_{r4S}$ leaks $(init\text{-}A, sid)$ to $Sim$ which internally simulates $M_2$ sending $(init\text{-}A, sid, tx_B^{lock})$ to simulated $\mathcal{G}_{mbp}$.
                \item When $M_2$ is corrupted, $Sim$ internally simulates $\mathcal{G}_{mbp}$, verifies $tx_B^{lock} = tx_B^{lock, sim}$ and then sends $(init\text{-}A, sid)$ as honest $M_2$ to $\mathcal{F}_{r4S}$.
                \item For both cases, set $init_A^{r4S} \leftarrow 1$ and $init_A^{mbp} \leftarrow 1$.
            \end{itemize}
        \item $A\text{-}share$: In the hybrid world $A$ sends $(share, sid, s)$ to $B$ irrespective of being honest or corrupted. In the ideal world: 
        \begin{itemize}
            \item When $A$ is honest, $\mathcal{F}_{r4S}$ leaks $(share, sid)$ to $Sim$ which sends $s_e^{sim}$ to $B$.
            \item When $A$ is corrupted, $Sim$ verifies $dig_e^{sim} = \mathcal{H}(sid, s)$ and stores $s_e^{sim} \leftarrow s$ and sends $(share, sid)$ as honest $A$ to $\mathcal{F}_{r4S}$.
            \item For both cases, set $shared^{r4S} \leftarrow 1$.
        \end{itemize}
        \item $P\text{-}redeeming$: In the hybrid world any party $P \in \{A, B, M_1, M_2\}$ sends $(redeem$, $sid$, $tx_A^{lock}$, $s_m$, $s_e$, $s_{r1}$, $s_{br}$, $s_{r2}$, $path)$ or $(redeem$, $sid$, $tx_B^{lock}$, $s_m$, $s_{r1}$, $s_{br}$, $s_{r2}$, $path)$ such that $path \in \mathcal{Q}$ to $B$ irrespective of $P$ being honest or corrupted. In the ideal world:
        \begin{itemize}
            \item If $P$ is honest, $Sim$ internally simulates sending $(redeem$, $sid$, $tx_A^{lock}$, $s_m$, $s_e$, $s_{r1}$, $s_{br}$, $s_{r2}$, $path)$ or $(redeem$, $sid$, $tx_B^{lock}$, $s_m$, $s_{r1}$, $s_{br}$, $s_{r2}$, $path)$ to $\mathcal{G}_{mbp}$.
            \item If $P=B$, $s_m = s_m^{sim}, s_{r2} = s_{r2}^{sim}$ as $Sim$ samples these preimages for honest $B$. $s_e = s_e^{sim}$ if it was sent earlier. All the other preimages are set to $\perp$.
            \item If $P=A$, $s_e = s_e^{sim}, s_{r1} = s_{r1}^{sim}, s_{br} = s_{br}^{sim}$ as $Sim$ samples these preimages for honest $A$. The remaining preimages are set to $\perp$.
            \item If $P=M_1$ or $M_2$ all preimages are set to $\perp$.
            \item When $P$ is corrupted, $Sim$ internally simulates the message and updates $pub_m^{mbp}$, $pub_e^{mbp}$, $pub_{r1}^{mbp}$, $pub_{br}^{mbp}$, $pub_{r2}^{mbp}$ and derives $res^{mbp}$ to send back to the Party $P$. It then checks if preimages match the stored digests, that is, $dig_m^{sim} = \mathcal{H}(sid,s_m)$, $dig_{e}^{sim} = \mathcal{H}(sid,s_e)$, $dig_{r1}^{sim} = \mathcal{H}(sid,s_{r1})$, $dig_{br}^{sim} = \mathcal{H}(sid,s_{br})$ and $dig_{r2}^{sim} = \mathcal{H}(sid,s_{r2})$, for all the preimages whose digests match $Sim$ uses the influence port in the ideal functionality to update the corresponding preimages. Thus, the values of preimages remain same in both $\mathcal{F}_{r4S}$ and $\mathcal{G}_{mbp}$
        \end{itemize}
        
    \end{itemize}
    \paragraph{Output indistinguishability}
    Since the values of preimages are the same in both $\mathcal{F}_{r4S}$ and $\mathcal{G}_mbp$, the output depends on the $rPredicate$ function which is the same for both. Thus, the values of both results are the same, that is, $res_{mbp} = res_{r4S}$.
    Thus, the $Sim$ makes $\mathcal{F}_{r4S}$ computationally indistinguishable from $\Pi_{r4S}$ by simulating $\mathcal{H}$ and $\mathcal{G}_mbp$. Thus, the construction of such a $Sim$ proves that $\Pi_{r4S}$ UC-realizes $\mathcal{F}_{r4S}$.
\end{proof}

\section{Bitcoin Scripts Implementation}
\label{implementation}
We have written two locking scripts in the Bitcoin scripting language, representing the conditions for locking the assets in the 4-Swap Protocol as mentioned in Figure~\ref{4s_utxo} and Figure~\ref{4s_utxoB}. The scripts leverage the hashlock and timelock functionality of Bitcoin and are presented in Figure~\ref{fig:lockingscriptsA} and Figure~\ref{fig:lockingscriptsB} along with the different redeem paths for different scenarios.

\begin{figure}
    \centering
    \includegraphics[width=0.7\textwidth]{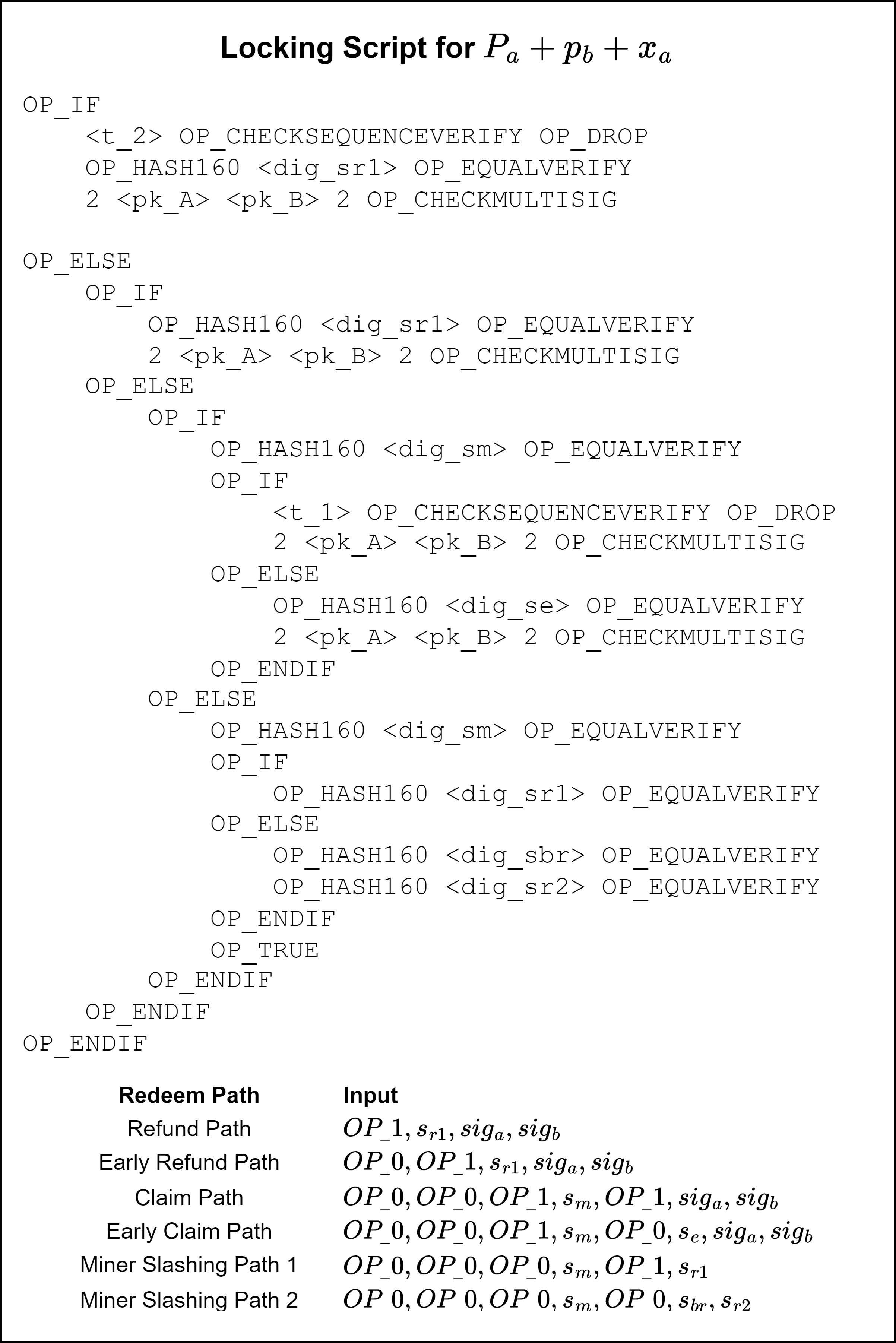}
    \caption{Locking script and redeem paths for Chain-A}
    \label{fig:lockingscriptsA}
\end{figure}

\begin{figure}
    \centering
    \includegraphics[width=0.7\textwidth]{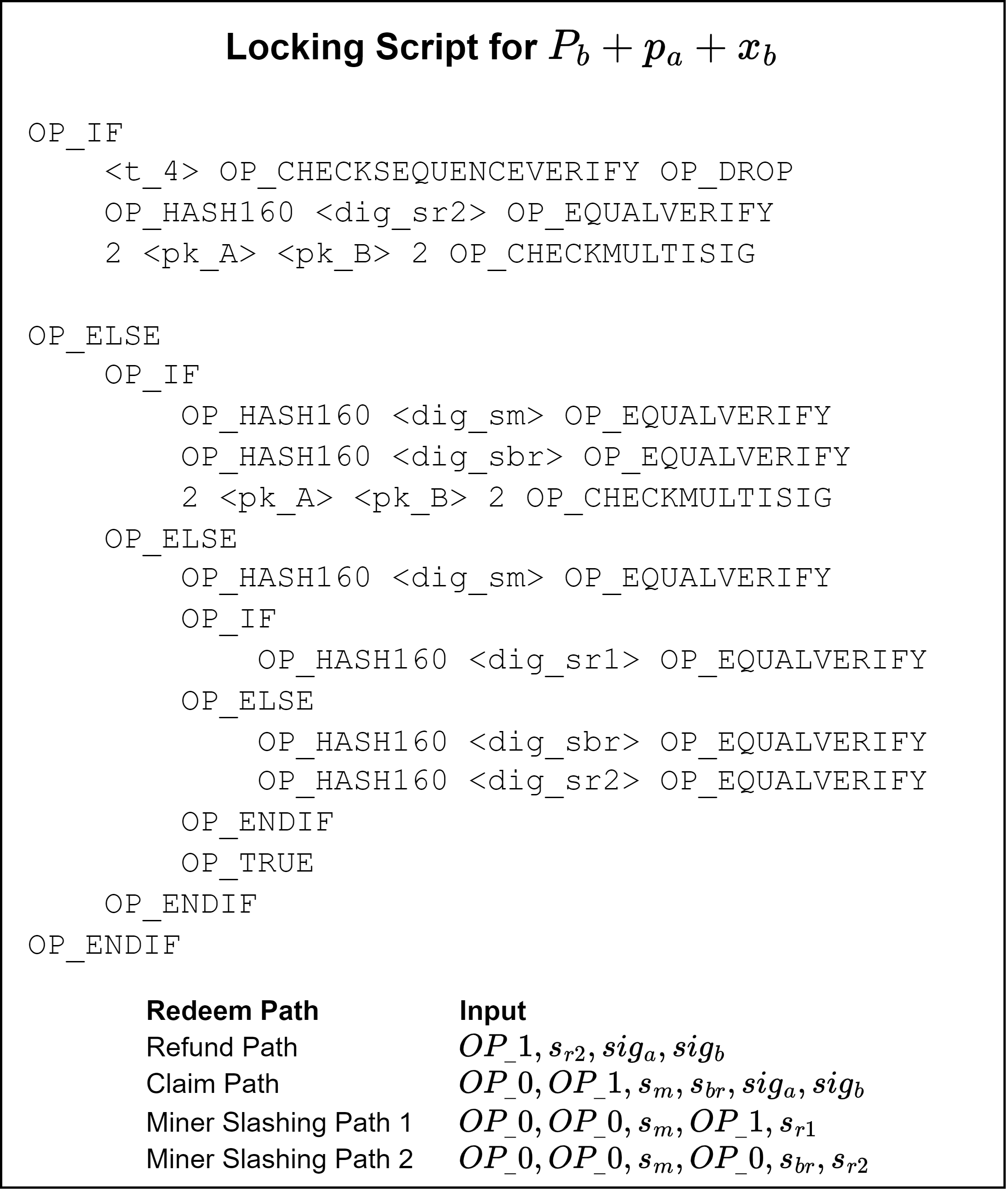}
    \caption{Locking script and redeem paths for Chain-B}
    \label{fig:lockingscriptsB}
\end{figure}

\end{document}